\documentclass[draft]{article}
\hyphenation{com-ple-te com-po-si-tional fol-lo-wing gran-ted pa-ra-me-ters tran-si-ti-ve un-ra-vel-lings}
\usepackage{amsfonts,amsmath,amssymb,enumerate,latexsym,times}
\newtheorem{lemma}{Lemma}
\newtheorem{proposition}{Proposition}
\newenvironment{proof}{{\bf Proof:}}{~$\dashv$\\}
\def\N{\mathbb{N}}
\def\KB{\mathbf{KB}}
\def\Int{\mathbf{Int}}
\def\KT{\mathbf{KT}}
\def\KD{\mathbf{KD}}
\def\Alt{\mathbf{Alt}}
\def\K{\mathbf{K}}
\def\GL{\mathbf{GL}}
\def\S{\mathbf{S}}
\def\KTB{\mathbf{KTB}}
\def\KDB{\mathbf{KDB}}
\def\KG{\mathbf{KG}}
\def\KTG{\mathbf{KTG}}
\def\KDG{\mathbf{KDG}}
\begin{document}
\title{About the unification type of
\\
simple symmetric modal logics}
\author{Philippe Balbiani$^{1}$ and \c{C}i\u{g}dem Gencer$^{2}$}
\date{$^{1}$Institut de recherche en informatique de Toulouse
\\
CNRS~---~Toulouse University
\\
Toulouse, France
\\
$^{2}$Faculty of Arts and Sciences
\\
Ayd\i n University
\\
Istanbul, Turkey}
\maketitle
\begin{abstract}
The unification problem in a normal modal logic is to determine, given a formula $\varphi$, whether there exists a substitution $\sigma$ such that $\sigma(\varphi)$ is in that logic.
In that case, $\sigma$ is a unifier of $\varphi$.
We shall say that a set of unifiers of a unifiable formula $\varphi$ is complete if for all unifiers $\sigma$ of $\varphi$, there exists a unifier $\tau$ of $\varphi$ in that set such that $\tau$ is more general than $\sigma$.
When a unifiable formula has no minimal complete set of unifiers, the formula is nullary.
In this paper, we prove that $\KB$, $\KDB$ and $\KTB$ possess nullary formulas.
\end{abstract}
\section{Introduction}
The unification problem in a normal modal logic is to determine, given a formula $\varphi$, whether there exists a substitution $\sigma$ such that $\sigma(\varphi)$ is in that logic.
In that case, $\sigma$ is a unifier of $\varphi$.
We shall say that a set of unifiers of a formula $\varphi$ is complete if for all unifiers $\sigma$ of $\varphi$, there exists a unifier $\tau$ of $\varphi$ in that set such that $\tau$ is more general than $\sigma$.
An important question is the following~\cite{Baader:Ghilardi:2011,Dzik:2007}: when a formula is unifiable, has it a minimal complete set of unifiers?
When the answer is ``no'', the formula is nullary.
When the answer is ``yes'', the formula is unitary, or finitary, or infinitary depending on the cardinalities of its minimal complete sets of unifiers.
A normal modal logic is called nullary if it possesses a nullary formula.
Otherwise, it is called unitary, or finitary, or infinitary depending on the types of its unifiable formulas.
We usually distinguish between elementary unification and unification with parameters.
In elementary unification, all variables are likely to be replaced by formulas when one applies a substitution.
In unification with parameters, some variables~---~called parameters~---~remain unchanged.
\\
\\
It is known that $\mathbf{S}{5}$ is unitary~\cite{Baader:Ghilardi:2011}, $\KT$ is nullary~\cite{Balbiani:to:appear}, $\KD$ is nullary~\cite{Balbiani:Gencer:2017a}, $\Alt_{1}$ is nullary~\cite{Balbiani:Tinchev:2016}, $\S{4.3}$ is unitary~\cite{Dzik:Wojtylak:2012}, transitive normal modal logics like ${\K}4$ are finitary~\cite{Ghilardi:2000} and $\K$ is nullary~\cite{Jerabek:2015}, though the nullariness character of $\KT$ and $\KD$ has only been obtained within the context of unification with parameters.
Taking a look at the literature about unification types in normal modal logics~\cite{Baader:Ghilardi:2011,Dzik:2007}, one will quickly notice that much remains to be done.
For instance, the types of simple Church-Rosser normal modal logics like $\KG$, $\KDG$ and $\KTG$ are unknown\footnote{In this paper, we follow the same conventions as in~\cite{Blackburn:deRijke:Venema:2001,Chagrov:Zakharyaschev:1997,Chellas:1980} for talking about normal modal logics: $\mathbf{S}{5}$ is the least normal modal logic containing the formulas usually denoted $(\mathbf{T})$, $(\mathbf{4})$ and $(\mathbf{B})$, $\KT$ is the least normal modal logic containing the formula usually denoted $(\mathbf{T})$, etc.}.
Even, for all $k\in\N$ such that $k\geq2$, the type of the least normal modal logic containing $\Box^{k}\bot$ is unknown.
In this paper, we adapt to $\KB$, $\KDB$ and $\KTB$ the argument of Je\u{r}\'abek~\cite{Jerabek:2015} showing $\K$ is nullary, though the nullariness character of $\KB$, $\KDB$ and $\KTB$ will only be obtained within the context of unification with parameters.
We assume the reader is at home with tools and techniques in modal logic.
For more on this, see Blackburn {\it et al.}~\cite{Blackburn:deRijke:Venema:2001}, or Chagrov and Zakharyaschev~\cite{Chagrov:Zakharyaschev:1997}, or Chellas~\cite{Chellas:1980}.
\section{Syntax}\label{section:normal:modal:logics:a}
In this section, we present the syntax of normal modal logics.
\paragraph{Formulas}
Let $\mathit{VAR}$ be a nonempty countable set of {\it propositional variables} (with typical members denoted $x$, $y$, etc) and $\mathit{PAR}$ be a nonempty countable set of {\it propositional parameters} (with typical members denoted $p$, $q$, etc).
{\it Atoms} (denoted $\alpha$, $\beta$, etc) are variables or parameters.
The set $\mathit{FOR}$ of all {\it formulas} (with typical members denoted $\varphi$, $\psi$, etc) is inductively defined as follows:
\begin{itemize}
\item $\varphi,\psi::=x\mid p\mid\bot\mid\neg\varphi\mid(\varphi\vee\psi)\mid\Box\varphi$.
\end{itemize}
We adopt the standard rules for omission of the parentheses.
The Boolean connectives $\top$, $\wedge$, $\rightarrow$ and $\leftrightarrow$ are defined by the usual abbreviations.
For all parameters $p$, we write ``$p^{0}$'' to mean ``$\neg p$'' and we write ``$p^{1}$'' to mean ``$p$''.
From now on,
\begin{center}
\begin{tabular}{|c|}
\hline
let $p,q$ be fixed distinct parameters.
\\
\hline
\end{tabular}
\end{center}
Let $\boxplus$ and $\boxminus$ be the modal connectives defined as follows:
\begin{itemize}
\item $\boxplus\varphi::=(p^{0}\wedge q^{0}\rightarrow\Box(p^{1}\wedge q^{0}\rightarrow\Box(p^{0}\wedge q^{1}\rightarrow\Box(p^{0}\wedge q^{0}\rightarrow\varphi))))$,
\item $\boxminus\varphi::=(p^{0}\wedge q^{0}\rightarrow\Box(p^{0}\wedge q^{1}\rightarrow\Box(p^{1}\wedge q^{0}\rightarrow\Box(p^{0}\wedge q^{0}\rightarrow\varphi))))$.
\end{itemize}
For all $k\in\N$, the modal connectives $\boxplus^{k}$ and $\boxminus^{k}$ are inductively defined as follows:
\begin{itemize}
\item $\boxplus^{0}\varphi::=\varphi$,
\item $\boxplus^{k+1}\varphi::=\boxplus\boxplus^{k}\varphi$,
\item $\boxminus^{0}\varphi::=\varphi$,
\item $\boxminus^{k+1}\varphi::=\boxminus\boxminus^{k}\varphi$.
\end{itemize}
For all $k\in\N$, the modal connectives $\boxplus^{<k}$ and $\boxminus^{<k}$ are inductively defined as follows:
\begin{itemize}
\item $\boxplus^{<0}\varphi::=\top$,
\item $\boxplus^{<k+1}\varphi::=(\boxplus^{<k}\varphi\wedge\boxplus^{k}\varphi)$,
\item $\boxminus^{<0}\varphi::=\top$,
\item $\boxminus^{<k+1}\varphi::=(\boxminus^{<k}\varphi\wedge\boxminus^{k}\varphi)$.
\end{itemize}
\paragraph{Degrees}
The {\it degree} of a formula $\varphi$ (in symbols $\deg(\varphi)$) is the nonnegative integer inductively defined as follows:
\begin{itemize}
\item $\deg(x)=0$,
\item $\deg(p)=0$,
\item $\deg(\bot)=0$,
\item $\deg(\neg\varphi)=\deg(\varphi)$,
\item $\deg(\varphi\vee\psi)=\max\{\deg(\varphi),\deg(\psi)\}$,
\item $\deg(\Box\varphi)=\deg(\varphi)+1$.
\end{itemize}
\begin{lemma}
Let $\varphi$ be a formula.
\begin{enumerate}
\item $\deg(\boxplus(\varphi)=\deg(\varphi)+3$,
\item $\deg(\boxminus(\varphi)=\deg(\varphi)+3$,
\item for all $k\in\N$, $\deg(\boxplus^{k}\varphi)=\deg(\varphi)+3k$,
\item for all $k\in\N$, $\deg(\boxminus^{k}\varphi)=\deg(\varphi)+3k$,
\item for all $k\in\N$, if $k=0$ then $\deg(\boxplus^{<k}\varphi)=0$ else $\deg(\boxplus^{<k}\varphi)=\deg(\varphi)+3(k-1)$,
\item for all $k\in\N$, if $k=0$ then $\deg(\boxminus^{<k}\varphi)=0$ else $\deg(\boxminus^{<k}\varphi)=\deg(\varphi)+3(k-1)$.
\end{enumerate}
\end{lemma}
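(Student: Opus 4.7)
The plan is to prove the six items in order, since each later item reduces to the earlier ones by a direct induction on $k$.

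First, I would handle items (1) and (2) by unfolding the definitions of $\boxplus\varphi$ and $\boxminus\varphi$ and applying the clauses of $\deg$ mechanically. In both cases the outermost connective is $\rightarrow$ (i.e.\ $\vee$ and $\neg$ after expansion), whose left operand is a Boolean combination of atoms and hence has degree $0$, while the right operand is a $\Box$ of a $\rightarrow$ whose right operand is again a $\Box$ of a $\rightarrow$ whose right operand is a $\Box\varphi'$ where $\varphi'$ is equivalent to $(p^{0}\wedge q^{0}\rightarrow\varphi)$. Each of the three nested $\Box$'s adds $1$ to the degree, while the antecedents of the implications, being atomic combinations, do not contribute since $\max\{0,n\}=n$ for $n\geq0$. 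This gives $\deg(\boxplus\varphi)=\deg(\varphi)+3$, and similarly for $\boxminus\varphi$.

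Next I would prove (3) and (4) by induction on $k$. The base case $k=0$ is immediate from $\boxplus^{0}\varphi=\varphi$. For the inductive step, $\boxplus^{k+1}\varphi=\boxplus\boxplus^{k}\varphi$, so by (1) and the induction hypothesis, $\deg(\boxplus^{k+1}\varphi)=\deg(\boxplus^{k}\varphi)+3=\deg(\varphi)+3k+3=\deg(\varphi)+3(k+1)$. The argument for $\boxminus^{k}$ is identical using (2).

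Finally, for (5) and (6), I would again induct on $k$. The base case $k=0$ gives $\deg(\boxplus^{<0}\varphi)=\deg(\top)=0$ directly. For $k=1$ I would separately verify that $\boxplus^{<1}\varphi=\top\wedge\varphi$ has degree $\deg(\varphi)=\deg(\varphi)+3(1-1)$, which is the base of the ``else'' clause. For the inductive step with $k\geq1$, using $\boxplus^{<k+1}\varphi=\boxplus^{<k}\varphi\wedge\boxplus^{k}\varphi$, the induction hypothesis and (3), one has $\deg(\boxplus^{<k+1}\varphi)=\max\{\deg(\varphi)+3(k-1),\deg(\varphi)+3k\}=\deg(\varphi)+3k=\deg(\varphi)+3((k+1)-1)$. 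Item (6) is handled analogously from (4).

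There is no real obstacle here: every clause is a routine arithmetic computation against the inductive definition of $\deg$. The only point that requires slight care is the case split at $k=0$ in items (5) and (6), where $\top$ contributes nothing to the degree and the formula for $k\geq1$ must be bootstrapped from $\boxplus^{<1}\varphi=\top\wedge\varphi$.
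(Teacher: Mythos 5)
Your proof is correct and follows exactly the route the paper intends: items (1) and (2) by direct computation of degrees (which the paper leaves to the reader), and items (3)--(6) by induction on $k$, with the $k=0$ versus $k\geq1$ case split in (5) and (6) handled properly. Nothing to add.
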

\begin{proof}
$(1)$~and~$(2)$: Left to the reader.
\\
$(3)$--$(6)$: By induction on $k$.
\end{proof}
\paragraph{Substitutions}
A {\it substitution} is a function $\sigma$ associating to each variable $x$ a formula $\sigma(x)$.
Following the standard assumption considered in the literature about the unification problem in normal modal logics~\cite{Baader:Ghilardi:2011,Dzik:2007}, we will always assume that substitutions move at most finitely many variables.
For all formulas $\varphi(x_{1},\ldots,x_{m},p_{1},\ldots,p_{n})$, let $\sigma(\varphi(x_{1},\ldots,x_{m},p_{1},\ldots,p_{n}))$ be $\varphi(\sigma(x_{1}),\ldots,\sigma(x_{m}),p_{1},\ldots,p_{n})$.
The {\it composition} $\sigma\circ\tau$ of the substitutions $\sigma$ and $\tau$ is the substitution associating to each variable $x$ the formula $\tau(\sigma(x))$.
\section{Semantics}\label{section:normal:modal:logics:b}
In this section, we present the semantics of normal modal logics.
\paragraph{Frames and models}
A {\it frame} is a couple $F=(W,R)$ where $W$ is a non-empty set of {\it states} and $R$ is a relation on $W$.
We shall say that a frame $F=(W,R)$ is {\it symmetric} if for all $s,t\in W$, if $sRt$ then $tRs$.
We shall say that a frame $F=(W,R)$ is {\it serial} if for all $s\in W$, there exists $t\in W$ such that $sRt$.
We shall say that a frame $F=(W,R)$ is {\it reflexive} if for all $s\in W$,  $sRs$.
Remark that reflexive frames are serial.
A {\it model} based on a frame $F=(W,R)$ is a triple $M=(W,R,V)$ where $V$ is a function assigning to each variable $x$ a subset $V(x)$ of $W$ and to each parameter $p$ a subset $V(p)$ of $W$.
Given a model $M=(W,R,V)$, the {\it satisfiability} of a modal formula $\varphi$ at $s\in W$ (in symbols $M,s\models\varphi$) is inductively defined as follows:
\begin{itemize}
\item $M,s\models x$ iff $s\in V(x)$,
\item $M,s\models p$ iff $s\in V(p)$,
\item $M,s\not\models\bot$,
\item $M,s\models\neg\varphi$ iff $M,s\not\models\varphi$,
\item $M,s\models\varphi\vee\psi$ iff $M,s\models\varphi$, or $M,s\models\psi$,
\item $M,s\models\Box\varphi$ iff for all $t\in W$, if $sRt$ then $M,t\models\varphi$.
\end{itemize}
\paragraph{Truth and validity}
We shall say that a formula $\varphi$ is {\it true} in a model $M=(W,R,V)$ if $\varphi$ is satisfied at all $s\in W$.
We shall say that a formula $\varphi$ is {\it valid} in a frame $F$ if $\varphi$ is true in all models based on $F$.
We shall say that a formula $\varphi$ is {\it valid} in a class $C$ of frames if $\varphi$ is valid in all frames of $C$.
Let $\KB$ be the set of all formulas valid in the class of all symmetric frames.
Let $\KDB$ be the set of all formulas valid in the class of all serial symmetric frames.
Let $\KTB$ be the set of all formulas valid in the class of all reflexive symmetric frames.
Obviously, $\KB\subseteq\KDB\subseteq\KTB$.
Moreover, $\KB$ is the least normal modal logic containing all formulas of the form $\neg\varphi\rightarrow\Box\neg\Box\varphi$, $\KDB$ is the least normal modal logic containing all formulas of the form $\Box\neg\varphi\rightarrow\neg\Box\varphi$ and $\neg\varphi\rightarrow\Box\neg\Box\varphi$ and $\KTB$ is the least normal modal logic containing all formulas of the form $\Box\varphi\rightarrow\varphi$ and $\neg\varphi\rightarrow\Box\neg\Box\varphi$.
From now on,
\begin{center}
\begin{tabular}{|c|}
\hline
we write ``frame'' to mean ``symmetric frame''.
\\
\hline
\end{tabular}
\end{center}
\begin{lemma}\label{easy:lemma:a}
For all $k\in\N$,
\begin{enumerate}
\item $\boxplus^{k}\top\in\KB$,
\item $\boxminus^{k}\top\in\KB$,
\item $\boxplus^{<k}\top\in\KB$,
\item $\boxminus^{<k}\top\in\KB$.
\end{enumerate}
\end{lemma}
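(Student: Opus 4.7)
The plan is to reduce all four items to a single preservation observation: if $\varphi\in\KB$, then $\boxplus\varphi\in\KB$ and $\boxminus\varphi\in\KB$. Granted this, items (1) and (2) follow by a routine induction on $k$, with base case $\boxplus^{0}\top=\boxminus^{0}\top=\top\in\KB$ and inductive step $\boxplus^{k+1}\top=\boxplus\boxplus^{k}\top$ (and similarly for $\boxminus$). Items (3) and (4) then follow by a further induction on $k$, using the defining clause $\boxplus^{<k+1}\top=\boxplus^{<k}\top\wedge\boxplus^{k}\top$ (and its $\boxminus$-analogue) together with (1), (2), and the fact that $\KB$, being a normal modal logic, is closed under conjunction.

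To establish the preservation observation, unfold $\boxplus\varphi$ to its definitional shape
\[p^{0}\wedge q^{0}\rightarrow\Box(p^{1}\wedge q^{0}\rightarrow\Box(p^{0}\wedge q^{1}\rightarrow\Box(p^{0}\wedge q^{0}\rightarrow\varphi))).\]
Assume $\varphi\in\KB$. Then propositional reasoning gives $p^{0}\wedge q^{0}\rightarrow\varphi\in\KB$; an application of the necessitation rule places $\Box(p^{0}\wedge q^{0}\rightarrow\varphi)$ in $\KB$; forming the next outer implication and again applying necessitation keeps us in $\KB$; one more such step completes the derivation of $\boxplus\varphi$. The argument for $\boxminus\varphi$ is identical in structure, the only difference being that the two middle guards $p^{1}\wedge q^{0}$ and $p^{0}\wedge q^{1}$ occur in the opposite order, which is immaterial to the reasoning.

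I do not anticipate any real difficulty: the proof relies only on the closure of $\KB$ under modus ponens and necessitation, together with the recursive definitions of $\boxplus^{k}$, $\boxminus^{k}$, $\boxplus^{<k}$ and $\boxminus^{<k}$. The degree bookkeeping of the previous lemma plays no role here; what is needed is solely the outer syntactic shape of $\boxplus$ and $\boxminus$, and the fact that symmetry of the frames is never invoked means the same argument would in fact yield the corresponding statements in $\K$ as well.
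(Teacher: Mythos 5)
Your proof is correct and follows essentially the same route as the paper, which simply says ``by induction on $k$''; you have filled in the details, correctly identifying that the inductive step rests on the closure of $\KB$ under the rule ``from $\varphi$ infer $\boxplus\varphi$ (resp.\ $\boxminus\varphi$)'', which in turn follows from necessitation and propositional weakening applied three times to the definitional unfolding of $\boxplus$ and $\boxminus$. Your closing remark that symmetry is never used is also accurate.
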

\begin{proof}
By induction on $k$.
\end{proof}
\begin{lemma}\label{easy:lemma:b}
For all $k\in\N$,
\begin{enumerate}
\item $\boxplus^{k}\bot\not\in\KB$,
\item $\boxminus^{k}\bot\not\in\KB$.
\end{enumerate}
\end{lemma}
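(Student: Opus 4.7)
The strategy is to observe that $\KB$ is sound with respect to the class of symmetric frames, so for each $k\in\N$ it suffices to exhibit a symmetric model in which $\boxplus^{k}\bot$ (respectively $\boxminus^{k}\bot$) is refuted at some state. I would not attempt any proof-theoretic argument; a direct countermodel is enough.

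The first step is to unfold the nested definition of $\boxplus$. Writing out the modal semantics, $M,s\not\models\boxplus^{k}\bot$ holds iff there is a sequence $s=w_{0},w_{1},\ldots,w_{3k}$ of states with $w_{i}Rw_{i+1}$ for every $i<3k$ and with the propositional labeling cycling so that $w_{i}\models p$ iff $i\equiv 1\pmod{3}$ and $w_{i}\models q$ iff $i\equiv 2\pmod{3}$. The innermost subformula $p^{0}\wedge q^{0}\rightarrow\bot$ collapses to $\neg(p^{0}\wedge q^{0})$, which is refuted at any world such as $w_{3k}$ satisfying $p^{0}\wedge q^{0}$, supplying the required witness. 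The analogous description for $\boxminus^{k}\bot$ is obtained by interchanging the two middle labels of the cycle.

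For $k=0$ the claim is immediate, since $\boxplus^{0}\bot=\boxminus^{0}\bot=\bot$ is refuted at every state of every model. For $k\geq 1$, I would take the frame $F_{k}=(W_{k},R_{k})$ with $W_{k}=\{w_{0},\ldots,w_{3k}\}$ and $R_{k}$ equal to the symmetric closure of $\{(w_{i},w_{i+1}):0\leq i<3k\}$, together with the valuation $V(p)=\{w_{i}:i\equiv 1\pmod{3}\}$ and $V(q)=\{w_{i}:i\equiv 2\pmod{3}\}$ (variables are immaterial). The sequence $w_{0},w_{1},\ldots,w_{3k}$ then witnesses the existential condition above, so $\boxplus^{k}\bot$ is refuted at $w_{0}$. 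For $\boxminus^{k}\bot$, I would use the same frame but interchange the valuations of $p$ and $q$.

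I do not expect a real obstacle. The only point requiring care is parsing the triply nested boxes in the definition of $\boxplus$ correctly; a clean write-up probably proceeds by induction on $k$, using $\boxplus^{k+1}\bot=\boxplus\boxplus^{k}\bot$ to peel off the three-step prefix $(0,0)\rightarrow(1,0)\rightarrow(0,1)\rightarrow(0,0)$ and invoke the inductive hypothesis on the suffix. Symmetry of $R_{k}$ plays no adversarial role, since the refutation condition is a purely existential claim about a forward directed path and adding backward edges cannot destroy it.
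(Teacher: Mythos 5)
Your proposal is correct and takes essentially the same route as the paper: an explicit countermodel on a path of length $3k$ with the period-$3$ labeling $V(p)=\{i:i\equiv1\ (\mathrm{mod}\ 3)\}$, $V(q)=\{i:i\equiv2\ (\mathrm{mod}\ 3)\}$, refuting $\boxplus^{k}\bot$ at the initial state. The only (cosmetic) differences are that the paper refutes $\boxminus^{k}\bot$ at the other endpoint $3k$ of the same model rather than swapping the valuations of $p$ and $q$, and it takes $R=\{(i,j):|j-i|\leq1\}$, which is reflexive~--- a feature irrelevant to this lemma but exploited later for the extension to $\KDB$ and $\KTB$.
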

\begin{proof}
Let $k\in\N$.
Let $F=(W,R)$ where $W=\{0,\ldots,3k\}$ and $R=\{(i,j):\ {\mid}j-i{\mid}\leq1\}$.
Let $M=(W,R,V)$ where $V(p)=\{i:\ i=1\mod3\}$, $V(q)=\{i:\ i=2\mod3\}$ and for all atoms $\alpha$, if $\alpha\not=p$ and $\alpha\not=q$ then $V(\alpha)=\emptyset$.
The reader may easily verify that $M,0\not\models\boxplus^{k}\bot$ and $M,3k\not\models\boxminus^{k}\bot$.
Hence, $\boxplus^{k}\bot\not\in\KB$ and $\boxminus^{k}\bot\not\in\KB$.
\end{proof}
In the proof of Lemma~\ref{easy:lemma:b}, remark that the frame $F=(W,R)$ is reflexive.
\begin{lemma}\label{lemma:about:box:less:than}
Let $\varphi$ be a formula.
For all $k\in\N$,
\begin{enumerate}
\item $(\boxplus^{<k+1}\varphi\leftrightarrow\varphi\wedge\boxplus\boxplus^{<k}\varphi)\in\KB$,
\item $(\boxminus^{<k+1}\varphi\leftrightarrow\varphi\wedge\boxminus\boxminus^{<k}\varphi)\in\KB$.
\end{enumerate}
\end{lemma}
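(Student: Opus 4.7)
The plan is to prove items (1) and (2) simultaneously by induction on $k$, relying on two ingredients: a preliminary observation that the macro-modalities $\boxplus$ and $\boxminus$ distribute over conjunction, and Lemma~\ref{easy:lemma:a} for the base case.

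First I would record the auxiliary fact that $(\boxplus(\psi\wedge\chi)\leftrightarrow\boxplus\psi\wedge\boxplus\chi)\in\KB$ and similarly for $\boxminus$. This is immediate from unfolding the definition of $\boxplus$: it is an iterated nesting of expressions of the form $\Box(\alpha\rightarrow-)$, and both $\Box(\psi\wedge\chi)\leftrightarrow\Box\psi\wedge\Box\chi$ (normality of $\K$) and $(\alpha\rightarrow\psi\wedge\chi)\leftrightarrow(\alpha\rightarrow\psi)\wedge(\alpha\rightarrow\chi)$ (propositional tautology) propagate the distributivity through every layer. Since $\KB\supseteq\K$, the distributivity holds in $\KB$.

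For the base case $k=0$, unfolding the definitions gives $\boxplus^{<1}\varphi=\boxplus^{<0}\varphi\wedge\boxplus^{0}\varphi=\top\wedge\varphi$ and $\varphi\wedge\boxplus\boxplus^{<0}\varphi=\varphi\wedge\boxplus\top$. By Lemma~\ref{easy:lemma:a}(1) applied with $k=1$, $\boxplus\top\in\KB$, so both sides are $\KB$-equivalent to $\varphi$. The analogous computation for $\boxminus$, using Lemma~\ref{easy:lemma:a}(2), handles (2).

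For the inductive step, assume $(\boxplus^{<k+1}\varphi\leftrightarrow\varphi\wedge\boxplus\boxplus^{<k}\varphi)\in\KB$. Unfolding once, $\boxplus^{<k+2}\varphi=\boxplus^{<k+1}\varphi\wedge\boxplus^{k+1}\varphi=\boxplus^{<k+1}\varphi\wedge\boxplus\boxplus^{k}\varphi$. Substituting the induction hypothesis into the first conjunct yields $\KB$-equivalence with $\varphi\wedge\boxplus\boxplus^{<k}\varphi\wedge\boxplus\boxplus^{k}\varphi$. The auxiliary distributivity then collapses the last two conjuncts into $\boxplus(\boxplus^{<k}\varphi\wedge\boxplus^{k}\varphi)=\boxplus\boxplus^{<k+1}\varphi$, giving precisely $\varphi\wedge\boxplus\boxplus^{<k+1}\varphi$. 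The inductive step for (2) is entirely symmetric, using the $\boxminus$ version of the auxiliary distributivity. The only step requiring any thought is the one-line distributivity claim; after that the induction is a mechanical unwinding of definitions.
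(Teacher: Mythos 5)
Your proof is correct and follows the same route as the paper, which simply says ``By induction on $k$''; your base case via Lemma~\ref{easy:lemma:a} and your inductive step via the distributivity of $\boxplus$ and $\boxminus$ over conjunction (which indeed holds, since both $\Box$ and $(\alpha\rightarrow-)$ distribute over $\wedge$ in any normal modal logic) supply exactly the details the paper leaves to the reader.
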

\begin{proof}
By induction on $k$.
\end{proof}
\begin{lemma}\label{lemma:about:k:l:and:boxes}
For all $k,l\in\N$,
\begin{enumerate}
\item if $k>l$ then $(\boxplus^{k}\bot\rightarrow\boxplus^{l}\bot)\not\in\KB$,
\item if $k>l$ then $(\boxminus^{k}\bot\rightarrow\boxminus^{l}\bot)\not\in\KB$.
\end{enumerate}
\end{lemma}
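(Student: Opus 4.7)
My plan is to reuse, with parameter $l$, the model $M=(W,R,V)$ constructed in the proof of Lemma~\ref{easy:lemma:b}, so that $W=\{0,\ldots,3l\}$, $R$ connects integers differing by at most $1$, and $V$ colors each state by its residue modulo~$3$. The frame is symmetric, so it will suffice, in each part, to exhibit a state at which the claimed implication fails.

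For part~(1), Lemma~\ref{easy:lemma:b} already gives $M,0\not\models\boxplus^{l}\bot$, so I only need to show $M,0\models\boxplus^{k}\bot$ whenever $k>l$. The approach is to first establish a ``walk lemma'': for every $0\leq i\leq l$ and every formula $\varphi$,
\[
M,0\models\boxplus^{i}\varphi\quad\text{iff}\quad M,3i\models\varphi,
\]
by induction on $i$. The point is that unfolding $\boxplus$ at state $3j$ (with $j<l$) forces one to look at the unique $R$-successor carrying label $p\wedge\neg q$, namely $3j+1$; then at its unique $R$-successor carrying label $\neg p\wedge q$, namely $3j+2$; then at its unique $R$-successor carrying label $\neg p\wedge\neg q$, namely $3j+3$. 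The other candidate successors either fall outside $W$ or bear the wrong $(p,q)$-label. I would then record a ``dead-end'' fact: $M,3l\models\boxplus\varphi$ for every $\varphi$, vacuously, because the only $R$-successors of $3l$ are $3l-1$ (labeled $\neg p\wedge q$) and $3l$ itself (labeled $\neg p\wedge\neg q$), so nothing realizes the label $p\wedge\neg q$ demanded by the outermost inner $\Box$ in the definition of $\boxplus$. Iterating this fact yields $M,3l\models\boxplus^{j}\bot$ for every $j\geq1$. Combining the walk lemma with the dead-end fact, $M,0\models\boxplus^{k}\bot$ reduces to $M,3l\models\boxplus^{k-l}\bot$, which holds since $k-l\geq1$. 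Hence $M,0\not\models\boxplus^{k}\bot\rightarrow\boxplus^{l}\bot$, so this implication does not belong to $\KB$.

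Part~(2) I would handle by the mirror argument, taking state $3l$ as base point. The $\boxminus$-walk lemma now moves three steps to the \emph{left} per application, reaching state~$0$ after $l$ applications; by Lemma~\ref{easy:lemma:b} we have $M,3l\not\models\boxminus^{l}\bot$; and the dead-end fact at state~$0$ uses that the only $R$-successors of $0$ are $0$ and $1$, neither of which bears the label $\neg p\wedge q$ demanded by the outermost inner $\Box$ in the definition of $\boxminus$, so $M,0\models\boxminus^{j}\bot$ for every $j\geq1$. The conclusion follows verbatim.

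The only delicate step is the walk lemma: $\boxplus$ contains three nested $\Box$s each guarded by a conditional on the $(p,q)$-label, and one has to verify that in our specific model the set of relevant successors at every nesting level is a singleton, so that three nested $\Box$s collapse into a single ``move three steps'' step. Once that is pinned down, the rest is routine bookkeeping driven by Lemma~\ref{easy:lemma:b}.
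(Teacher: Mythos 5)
Your proposal is correct and uses exactly the paper's countermodel: the reflexive symmetric frame on $\{0,\ldots,3l\}$ with $R$ linking integers differing by at most $1$ and the mod-$3$ colouring by $p$ and $q$, refuting part (1) at state $0$ and part (2) at state $3l$. The paper simply asserts that ``the reader may easily verify'' the two failures of satisfaction; your walk lemma and dead-end observation supply precisely that verification, so the two arguments coincide.
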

\begin{proof}
Let $k\in\N$.
Suppose $k>l$.
Let $F=(W,R)$ where $W=\{0,\ldots,3l\}$ and $R=\{(i,j):\ {\mid}j-i{\mid}\leq1\}$.
Let $M=(W,R,V)$ where $V(p)=\{i:\ i=1\mod3\}$, $V(q)=\{i:\ i=2\mod3\}$ and for all atoms $\alpha$, if $\alpha\not=p$ and $\alpha\not=q$ then $V(\alpha)=\emptyset$.
The reader may easily verify that $M,0\not\models(\boxplus^{k}\bot\rightarrow\boxplus^{l}\bot)$ and $M,3l\not\models(\boxminus^{k}\bot\rightarrow\boxminus^{l}\bot)$.
Hence, $(\boxplus^{k}\bot\rightarrow\boxplus^{l}\bot)\not\in\KB$ and $(\boxminus^{k}\bot\rightarrow\boxminus^{l}\bot)\not\in\KB$.
\end{proof}
In the proof of Lemma~\ref{lemma:about:k:l:and:boxes}, remark that the frame $F=(W,R)$ is reflexive.
\begin{lemma}\label{proposition:tense:modalities}
For all formulas $\varphi$, $(\varphi\rightarrow\boxplus\varphi)\in\KB$ iff $(\neg\varphi\rightarrow\boxminus\neg\varphi)\in\KB$.
\end{lemma}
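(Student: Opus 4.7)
The plan is to use the semantic characterization of $\KB$ as validity in the class of symmetric frames, together with the observation that $\boxplus$ and $\boxminus$ describe the same three-step labeled path read in opposite directions. Unpacking the definitions, $M,s_{0}\models\boxplus\varphi$ says that for every $R$-path $s_{0}Rs_{1}Rs_{2}Rs_{3}$ whose vertices $s_{0},s_{1},s_{2},s_{3}$ satisfy respectively $p^{0}\wedge q^{0}$, $p^{1}\wedge q^{0}$, $p^{0}\wedge q^{1}$, $p^{0}\wedge q^{0}$, one has $M,s_{3}\models\varphi$; while $M,v_{0}\models\boxminus\varphi$ says the same thing along a path $v_{0}Rv_{1}Rv_{2}Rv_{3}$ carrying the \emph{reversed} label sequence $p^{0}\wedge q^{0}$, $p^{0}\wedge q^{1}$, $p^{1}\wedge q^{0}$, $p^{0}\wedge q^{0}$. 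Since every edge of a symmetric frame can be traversed in either direction, a $\boxplus$-labeled path from $v$ to $s$ is exactly a $\boxminus$-labeled path from $s$ to $v$.

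For the forward direction, suppose $(\varphi\rightarrow\boxplus\varphi)\in\KB$, fix a model $M=(W,R,V)$ based on a symmetric frame, and take $s\in W$ with $M,s\models\neg\varphi$. To establish $M,s\models\boxminus\neg\varphi$, assume $M,s\models p^{0}\wedge q^{0}$ and pick arbitrary $t,u,v\in W$ such that $sRt$, $tRu$, $uRv$ and $M,t\models p^{0}\wedge q^{1}$, $M,u\models p^{1}\wedge q^{0}$, $M,v\models p^{0}\wedge q^{0}$. The goal is to show $M,v\models\neg\varphi$. Assume toward a contradiction that $M,v\models\varphi$. By the hypothesis, $M,v\models\boxplus\varphi$. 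By symmetry of $R$, the sequence $v,u,t,s$ is an $R$-path whose successive labels are precisely $p^{0}\wedge q^{0}$, $p^{1}\wedge q^{0}$, $p^{0}\wedge q^{1}$, $p^{0}\wedge q^{0}$, so three successive unpackings of the nested boxes in $\boxplus\varphi$ yield $M,s\models\varphi$, contradicting $M,s\models\neg\varphi$.

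The converse direction is proved by exactly the same argument with the roles of $\varphi$ and $\neg\varphi$, and of $\boxplus$ and $\boxminus$, swapped: starting from $M,v\models\varphi$ on a $\boxplus$-path $s,t,u,v$, a contradiction with the new hypothesis $(\neg\varphi\rightarrow\boxminus\neg\varphi)\in\KB$ is obtained by reversing the path and unpacking $\boxminus\neg\varphi$ at $v$. I do not expect any genuine obstacle: once the bijection between $\boxplus$-paths from $v$ to $s$ and $\boxminus$-paths from $s$ to $v$ in a symmetric frame has been noted, the remainder is routine bookkeeping of the four $p,q$-labels along the path and a standard contrapositive traversal of three nested boxes. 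The only point requiring care is to verify that the label sequence defining $\boxplus$ is genuinely the reverse of the one defining $\boxminus$, which is immediate from the given definitions.
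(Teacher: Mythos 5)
Your proof is correct and follows essentially the same route as the paper's: both exploit the symmetry of $R$ to reverse the three-step path and use the fact that the $p,q$-label sequence of $\boxplus$ is the reverse of that of $\boxminus$, unpacking the three nested boxes at the far endpoint to reach a contradiction. The only difference is presentational (you argue each implication directly via a local contradiction at $v$, while the paper frames the whole equivalence as a global proof by contradiction), which does not affect the substance.
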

\begin{proof}
Let $\varphi$ be a formula such that $(\varphi\rightarrow\boxplus\varphi)\not\in\KB$ and $(\neg\varphi\rightarrow\boxminus\neg\varphi)\in\KB$, or $(\varphi\rightarrow\boxplus\varphi)\in\KB$ and $(\neg\varphi\rightarrow\boxminus\neg\varphi)\not\in\KB$.
\\
{\bf ---~Case ``$(\varphi\rightarrow\boxplus\varphi)\not\in\KB$ and $(\neg\varphi\rightarrow\boxminus\neg\varphi)\in\KB$'':}
Let $F=(W,R)$ be a frame, $M=(W,R,V)$ be a model based on $F$ and $s\in W$ be such that $M,s\not\models(\varphi\rightarrow\boxplus\varphi)$.
Hence, $M,s\models\varphi$ and $M,s\not\models\boxplus\varphi$.
Let $t,u,v\in W$ be such that $sRt$, $tRu$, $uRv$, $M,s\models p^{0}\wedge q^{0}$, $M,t\models p^{1}\wedge q^{0}$, $M,u\models p^{0}\wedge q^{1}$, $M,v\models p^{0}\wedge q^{0}$ and $M,v\not\models\varphi$.
Thus, $tRs$, $uRt$ and $vRu$.
Moreover, $M,v\models\neg\varphi$.
Since $(\neg\varphi\rightarrow\boxminus\neg\varphi)\in\KB$, therefore $M,v\models(\neg\varphi\rightarrow\boxminus\neg\varphi)\in\KB$.
Since $M,v\models\neg\varphi$, therefore $M,v\models\boxminus\neg\varphi$.
Since $tRs$, $uRt$, $vRu$, $M,s\models p^{0}\wedge q^{0}$, $M,t\models p^{1}\wedge q^{0}$, $M,u\models p^{0}\wedge q^{1}$ and $M,v\models p^{0}\wedge q^{0}$, therefore $M,s\models\neg\varphi$.
Consequently, $M,s\not\models\varphi$: a contradiction.
\\
{\bf ---~Case ``$(\varphi\rightarrow\boxplus\varphi)\in\KB$ and $(\neg\varphi\rightarrow\boxminus\neg\varphi)\not\in\KB$'':}
Let $F=(W,R)$ be a frame, $M=(W,R,V)$ be a model based on $F$ and $s\in W$ be such that $M,s\not\models(\neg\varphi\rightarrow\boxminus\neg\varphi)$.
Hence, $M,s\models\neg\varphi$ and $M,s\not\models\boxminus\neg\varphi$.
Let $t,u,v\in W$ be such that $sRt$, $tRu$, $uRv$, $M,s\models p^{0}\wedge q^{0}$, $M,t\models p^{0}\wedge q^{1}$, $M,u\models p^{1}\wedge q^{0}$, $M,v\models p^{0}\wedge q^{0}$ and $M,v\not\models\neg\varphi$.
Thus, $tRs$, $uRt$ and $vRu$.
Moreover, $M,v\models\varphi$.
Since $(\varphi\rightarrow\boxplus\varphi)\in\KB$, therefore $M,v\models(\varphi\rightarrow\boxplus\varphi)\in\KB$.
Since $M,v\models\varphi$, therefore $M,v\models\boxplus\varphi$.
Since $tRs$, $uRt$, $vRu$, $M,s\models p^{0}\wedge q^{0}$, $M,t\models p^{0}\wedge q^{1}$, $M,u\models p^{1}\wedge q^{0}$ and $M,v\models p^{0}\wedge q^{0}$, therefore $M,s\models\varphi$.
Consequently, $M,s\not\models\neg\varphi$: a contradiction.
\end{proof}
\section{Unification}\label{section:unification}
In this section, we present unification in $\KB$.
\paragraph{Unification problem}
We shall say that a substitution $\sigma$ is {\it equivalent} to a substitution $\tau$ (in symbols $\sigma\simeq\tau$) if for all variables $x$, $(\sigma(x)\leftrightarrow\tau(x))\in\KB$.
We shall say that a substitution $\sigma$ is more {\it general} than a substitution $\tau$ (in symbols $\sigma\preceq\tau$) if there exists a substitution $\upsilon$ such that $\sigma\circ\upsilon\simeq\tau$.
Obviously, $\preceq$ contains $\simeq$.
Moreover,
\begin{proposition}[Baader and Ghilardi~\cite{Baader:Ghilardi:2011}, Dzik~\cite{Dzik:2007}]\label{lemma:simeq:ref:sym:tra}
\begin{enumerate}
\item The binary relation $\simeq$ is reflexive, symmetric and transitive on the set of all substitutions,
\item the binary relation $\preceq$ is reflexive and transitive on the set of all substitutions.
\end{enumerate}
\end{proposition}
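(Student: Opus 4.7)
The proof is a routine verification relying on two standard facts about normal modal logics that I would either invoke directly or quickly justify in passing: first, that $\KB$, being a normal modal logic, contains all propositional tautologies and is closed under substitution (so $\varphi\leftrightarrow\psi\in\KB$ implies $\upsilon(\varphi)\leftrightarrow\upsilon(\psi)\in\KB$ for every substitution $\upsilon$); second, that substitution composition is associative and acts on formulas in the expected way, i.e.\ for every formula $\varphi$ and every pair of substitutions $\upsilon_{1},\upsilon_{2}$, $(\upsilon_{1}\circ\upsilon_{2})(\varphi)=\upsilon_{2}(\upsilon_{1}(\varphi))$.

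For item $(1)$, I would verify the three properties of $\simeq$ atom by atom. Reflexivity is immediate: $\varphi\leftrightarrow\varphi$ is a propositional tautology, hence in $\KB$. Symmetry follows because $(\varphi\leftrightarrow\psi)\rightarrow(\psi\leftrightarrow\varphi)$ is a propositional tautology, so if $\sigma(x)\leftrightarrow\tau(x)\in\KB$ for all $x$ then $\tau(x)\leftrightarrow\sigma(x)\in\KB$ for all $x$. Transitivity is identical, using the tautology $(\varphi\leftrightarrow\psi)\wedge(\psi\leftrightarrow\chi)\rightarrow(\varphi\leftrightarrow\chi)$.

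For item $(2)$, reflexivity of $\preceq$ is witnessed by the identity substitution $\iota$ defined by $\iota(x)=x$ for every variable $x$. Indeed, for every variable $x$ the formula $(\sigma\circ\iota)(x)=\iota(\sigma(x))$ is syntactically $\sigma(x)$ (since $\iota$ replaces each variable by itself), so $\sigma\circ\iota=\sigma$ and in particular $\sigma\circ\iota\simeq\sigma$, giving $\sigma\preceq\sigma$. For transitivity, suppose $\sigma\preceq\tau$ and $\tau\preceq\rho$, witnessed by $\upsilon_{1},\upsilon_{2}$ with $\sigma\circ\upsilon_{1}\simeq\tau$ and $\tau\circ\upsilon_{2}\simeq\rho$. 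I claim $\upsilon_{1}\circ\upsilon_{2}$ witnesses $\sigma\preceq\rho$. Using associativity, $\sigma\circ(\upsilon_{1}\circ\upsilon_{2})=(\sigma\circ\upsilon_{1})\circ\upsilon_{2}$; using closure of $\KB$ under substitution applied to the equivalence $\sigma\circ\upsilon_{1}\simeq\tau$ one gets $(\sigma\circ\upsilon_{1})\circ\upsilon_{2}\simeq\tau\circ\upsilon_{2}$; finally $\tau\circ\upsilon_{2}\simeq\rho$ by hypothesis, and transitivity of $\simeq$ from item $(1)$ concludes $\sigma\circ(\upsilon_{1}\circ\upsilon_{2})\simeq\rho$.

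No step is a genuine obstacle; the only mildly delicate point is not to confuse the order of composition in the application to atoms, which is why I would state the compositional law $(\upsilon_{1}\circ\upsilon_{2})(\varphi)=\upsilon_{2}(\upsilon_{1}(\varphi))$ explicitly before invoking associativity. Everything else is propositional bookkeeping inside $\KB$.
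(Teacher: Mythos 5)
Your proof is correct: the paper itself gives no argument for this proposition, merely citing it to Baader--Ghilardi and Dzik, and your verification (propositional tautologies plus closure of $\KB$ under substitution for item~$(1)$; the identity substitution and associativity of composition, in the paper's order $(\sigma\circ\tau)(x)=\tau(\sigma(x))$, for item~$(2)$) is exactly the standard argument the citation stands in for. Nothing is missing.
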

We shall say that a set $\Sigma$ of substitutions is {\it minimal} if for all $\sigma,\tau\in\Sigma$, if $\sigma\preceq\tau$ then $\sigma\simeq\tau$.
We shall say that a formula $\varphi$ is {\it unifiable} if there exists a substitution $\sigma$ such that $\sigma(\varphi)\in\KB$.
In that case, $\sigma$ is a {\it unifier} of $\varphi$.
\begin{proposition}\label{normal:unifiers:are:enough}
Let $\varphi$ be a formula.
For all unifiers $\sigma$ of $\varphi$, there exists a unifier $\tau$ of $\varphi$ such that $\tau\preceq\sigma$ and for all variables $x$, if $x$ does not occur in $\varphi$ then $\tau(x)=x$.
\end{proposition}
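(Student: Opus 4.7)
The plan is to take the natural restriction of $\sigma$ to the variables that actually occur in $\varphi$. Let $X$ denote the set of variables of $\varphi$ and define $\tau$ by $\tau(x) := \sigma(x)$ when $x \in X$ and $\tau(y) := y$ when $y \notin X$. The variable-fixing clause of the statement is then immediate, and since $\varphi$ involves only variables from $X$, the substitutions $\tau$ and $\sigma$ act identically on $\varphi$, so $\tau(\varphi) = \sigma(\varphi) \in \KB$; hence $\tau$ is a unifier of $\varphi$.

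It remains to establish $\tau \preceq \sigma$, that is, to exhibit a substitution $\upsilon$ with $\tau \circ \upsilon \simeq \sigma$. I first try the natural choice $\upsilon := \sigma$: for every variable $y$, $(\tau \circ \sigma)(y) = \sigma(\tau(y))$, which equals $\sigma(y)$ when $y \notin X$ and equals $\sigma(\sigma(y))$ when $y \in X$. The first case closes at once; the second calls for $\sigma(\sigma(y)) \leftrightarrow \sigma(y) \in \KB$, which is the heart of the matter.

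This equivalence $\sigma(\sigma(y)) \equiv \sigma(y)$ is precisely where I expect the main obstacle. When the naive $\upsilon = \sigma$ does not suffice, my plan is to refine the witness by setting $\upsilon(x) := x$ for every $x \in X$ and $\upsilon(z) := \sigma(z)$ for every $z \notin X$. With this refined $\upsilon$, the identity $(\tau \circ \upsilon)(z) = \upsilon(\tau(z)) \equiv \sigma(z)$ holds trivially when $z \notin X$, and when $z \in X$ it reduces to verifying $\upsilon(\sigma(z)) \equiv \sigma(z)$ by a structural analysis on each $\sigma(x)$, relying on the closure of $\KB$ under uniform substitution. Transitivity of $\preceq$, supplied by Proposition~\ref{lemma:simeq:ref:sym:tra}, then gives $\tau \preceq \sigma$ as required.
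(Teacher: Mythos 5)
Your choice of $\tau$ is the naive restriction of $\sigma$ to the variables of $\varphi$, and this is exactly where the argument breaks: that $\tau$ need not satisfy $\tau\preceq\sigma$ at all, so no choice of witness $\upsilon$ can save it. Recall that $\tau\preceq\sigma$ requires a $\upsilon$ with $(\upsilon(\tau(w))\leftrightarrow\sigma(w))\in\KB$ for \emph{every} variable $w$; since $\tau(w)=w$ for every $w\notin X$, such a $\upsilon$ is forced to satisfy $(\upsilon(w)\leftrightarrow\sigma(w))\in\KB$ outside $X$, and this forced behaviour collides with what is needed on $X$. Concretely, take $\varphi:=x\vee\neg x$, so $X=\{x\}$, and let $\sigma(x):=y$, $\sigma(y):=\bot$ for some variable $y\neq x$, and $\sigma(v):=v$ otherwise; this $\sigma$ is a unifier. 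Your $\tau$ has $\tau(x)=y$ and $\tau(y)=y$, so any witness $\upsilon$ must satisfy both $(\upsilon(y)\leftrightarrow y)\in\KB$ (from the clause at $w=x$) and $(\upsilon(y)\leftrightarrow\bot)\in\KB$ (from the clause at $w=y$), whence $(y\leftrightarrow\bot)\in\KB$, which is false. In particular your refined witness fails here: at $w=x$ it yields $\upsilon(\tau(x))=\upsilon(y)=\sigma(y)=\bot$, which is not $\KB$-equivalent to $\sigma(x)=y$, and the appeal to ``structural analysis'' and closure under uniform substitution cannot repair an equivalence that is simply not valid.

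The missing idea is a \emph{renaming}. The only variables on which the witness $\upsilon$ is free are those of $X=\mathrm{var}(\varphi)$; everywhere else $\upsilon$ is pinned to $\sigma$. So $\tau(x_i)$ must be built from $\sigma(x_i)$ by injectively renaming the offending variables occurring in the $\sigma(x_j)$ (those lying in $X$ or moved by $\sigma$) into $X$ itself, leaving the variables that $\sigma$ does not move untouched; the witness $\upsilon$ then undoes the renaming on $X$ and agrees with $\sigma$ off $X$, and closure of $\KB$ under substitution shows that the renamed $\tau$ is still a unifier (in the example above this gives $\tau(x)=x$, with $\upsilon(x)=y$, $\upsilon(y)=\bot$). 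Carrying this out in general --- in particular, arranging the injection into $X$ --- is the actual content of the proposition, and none of it appears in your argument.
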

\begin{proof}
Left to the reader.
\end{proof}
We shall say that a set $\Sigma$ of unifiers of a unifiable formula $\varphi$ is {\it complete} if for all unifiers $\sigma$ of $\varphi$, there exists $\tau\in\Sigma$ such that $\tau\preceq\sigma$.
\paragraph{Unification types}
An important question is the following: when a formula is unifiable, has it a minimal complete set of unifiers?
When the answer is ``yes'', how large is this set?
We shall say that a unifiable formula
\begin{itemize}
\item $\varphi$ is {\it nullary} if there exists no minimal complete set of unifiers of $\varphi$,
\item $\varphi$ is {\it unitary} if there exists a minimal complete set of unifiers of $\varphi$ with cardinality $1$,
\item $\varphi$ is {\it finitary} if there exists a finite minimal complete set of unifiers of $\varphi$ but there exists no with cardinality $1$,
\item $\varphi$ is {\it infinitary} if there exists a minimal complete set of unifiers of $\varphi$ but there exists no finite one.
\end{itemize}
\section{Playing with substitutions}\label{section:playing}
For all $k\in\N$, let $\sigma_{k}$ and $\tau_{k}$ be the substitutions inductively defined as follows:
\begin{itemize}
\item $\sigma_{0}(x)=\bot$,
\item for all variables $y$ distinct from $x$, $\sigma_{0}(y)=y$,
\item $\tau_{0}(x)=\top$,
\item for all variables $y$ distinct from $x$, $\tau_{0}(y)=y$,
\item $\sigma_{k+1}(x)=(x\wedge\boxplus\sigma_{k}(x))$,
\item for all variables $y$ distinct from $x$, $\sigma_{k+1}(y)=y$,
\item $\tau_{k+1}(x)=\neg(\neg x\wedge\boxminus\neg\tau_{k}(x))$,
\item for all variables $y$ distinct from $x$, $\tau_{k+1}(y)=y$.
\end{itemize}
These substitutions will be used in Section~\ref{section:KB:is:nullary} to prove that $\KB$ possesses nullary formulas.
\begin{lemma}\label{lemma:to:be:used:later}
For all $k\in\N$,
\begin{enumerate}
\item $(\boxplus^{<k}x\wedge\boxplus^{k}\bot\rightarrow\sigma_{k}(x))\in\KB$,
\item $(\boxminus^{<k}\neg x\wedge\boxminus^{k}\bot\rightarrow\neg\tau_{k}(x))\in\KB$.
\end{enumerate}
\end{lemma}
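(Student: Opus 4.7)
The plan is to prove both parts simultaneously by induction on $k$, using Lemma~\ref{lemma:about:box:less:than} to unfold $\boxplus^{<k+1}$ and $\boxminus^{<k+1}$ at the inductive step and then exploiting the fact that, since $\boxplus$ and $\boxminus$ are built from $\Box$ and propositional connectives, they behave like normal modalities over $\KB$.

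For the base case $k=0$, the antecedent of each implication contains $\boxplus^{0}\bot = \bot$ (resp.\ $\boxminus^{0}\bot = \bot$), so both implications are propositional tautologies, hence members of $\KB$.

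For the inductive step, consider part~(1). By Lemma~\ref{lemma:about:box:less:than}(1), $\boxplus^{<k+1}x$ is $\KB$-equivalent to $x\wedge\boxplus\boxplus^{<k}x$; by definition, $\boxplus^{k+1}\bot = \boxplus\boxplus^{k}\bot$ and $\sigma_{k+1}(x) = x\wedge\boxplus\sigma_{k}(x)$. The conjunct $x$ appears in both antecedent and consequent, so the goal reduces to showing
\[
\boxplus\boxplus^{<k}x\wedge\boxplus\boxplus^{k}\bot\rightarrow\boxplus\sigma_{k}(x)\in\KB.
\]
Here I would invoke two easy facts about $\boxplus$: (a) since $\Box$ distributes over conjunction in $\KB$ and $(\alpha\rightarrow\beta\wedge\gamma)\leftrightarrow(\alpha\rightarrow\beta)\wedge(\alpha\rightarrow\gamma)$ is propositional, $\boxplus(\varphi\wedge\psi)\leftrightarrow\boxplus\varphi\wedge\boxplus\psi$ is in $\KB$; and (b) $\boxplus$ is monotone, i.e.\ if $\varphi\rightarrow\psi\in\KB$, then $\boxplus\varphi\rightarrow\boxplus\psi\in\KB$. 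Using (a), the above implication is equivalent to $\boxplus(\boxplus^{<k}x\wedge\boxplus^{k}\bot)\rightarrow\boxplus\sigma_{k}(x)$, which follows from the induction hypothesis together with (b).

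Part~(2) proceeds identically with $\boxminus$ in place of $\boxplus$: one first observes that $\neg\tau_{k+1}(x)$ is $\KB$-equivalent to $\neg x\wedge\boxminus\neg\tau_{k}(x)$ (by double-negation elimination in the definition of $\tau_{k+1}$), then uses Lemma~\ref{lemma:about:box:less:than}(2) to unfold $\boxminus^{<k+1}\neg x$ as $\neg x\wedge\boxminus\boxminus^{<k}\neg x$, strips the common $\neg x$, and reduces via distributivity and monotonicity of $\boxminus$ to an instance of the inductive hypothesis for part~(2). The only mild obstacle is checking (a) and (b) for the compound operators $\boxplus$ and $\boxminus$; once these are recorded, the induction is routine, and I would either verify them as an auxiliary lemma or note them in-line, since they follow from the normality of $\Box$ and pure propositional reasoning.
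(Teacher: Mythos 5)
Your proposal is correct and follows exactly the route the paper intends: the paper's own proof is simply ``by induction on $k$,'' and your base case, your use of Lemma~\ref{lemma:about:box:less:than} to unfold $\boxplus^{<k+1}$ and $\boxminus^{<k+1}$, and your appeal to distributivity over conjunction and monotonicity of the defined modalities $\boxplus$ and $\boxminus$ (both inherited from the normality of $\Box$) are precisely the details being left to the reader. No gaps.
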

\begin{proof}
By induction on $k$.
\end{proof}
\begin{lemma}\label{lemma:sigma:tau:imply:x}
For all $k\in\N$,
\begin{enumerate}
\item $(\sigma_{k}(x)\rightarrow x)\in\KB$,
\item $(\neg\tau_{k}(x)\rightarrow\neg x)\in\KB$.
\end{enumerate}
\end{lemma}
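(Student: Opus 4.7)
The plan is to prove both items by a straightforward induction on $k$. In fact, no induction hypothesis is genuinely needed: in both cases the conclusion falls out directly from the definition of $\sigma_{k+1}$ and $\tau_{k+1}$, so the ``induction'' collapses to a case split on whether $k=0$ or $k\geq 1$, and the modal axioms governing $\KB$ (or any nontrivial property of $\boxplus$ and $\boxminus$) play no role.

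For item~(1), I would first dispatch the base case: since $\sigma_{0}(x)=\bot$, the implication $(\sigma_{0}(x)\rightarrow x)$ is a propositional tautology, hence belongs to $\KB$. For the successor case $k\geq 1$, I would simply observe that by definition $\sigma_{k}(x)=(x\wedge\boxplus\sigma_{k-1}(x))$ has $x$ as a top-level conjunct, so $(\sigma_{k}(x)\rightarrow x)$ is again a propositional tautology and thus lies in $\KB$.

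Item~(2) is the mirror image, handled in the same way. The base case $k=0$ is clear because $\tau_{0}(x)=\top$ makes $\neg\tau_{0}(x)$ propositionally equivalent to $\bot$, so $(\neg\tau_{0}(x)\rightarrow\neg x)\in\KB$. For $k\geq 1$, the definition $\tau_{k}(x)=\neg(\neg x\wedge\boxminus\neg\tau_{k-1}(x))$ yields, after eliminating the outer double negation, that $\neg\tau_{k}(x)$ is propositionally equivalent to $(\neg x\wedge\boxminus\neg\tau_{k-1}(x))$, from which the conjunct $\neg x$ is extracted and the desired implication follows.

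I do not foresee any real obstacle: the lemma is purely a syntactic consequence of how the substitutions $\sigma_{k}$ and $\tau_{k}$ are built, and is evidently a bookkeeping step that will be combined later with Lemma~\ref{lemma:to:be:used:later} (and with the symmetry captured by Lemma~\ref{proposition:tense:modalities}) when arguing that $\KB$ possesses nullary formulas in Section~\ref{section:KB:is:nullary}.
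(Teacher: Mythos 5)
Your proof is correct and matches the paper's approach, which is simply ``by induction on $k$''; your observation that the inductive step never actually invokes the induction hypothesis (each implication being a propositional tautology by the shape of $\sigma_{k}(x)$ and $\neg\tau_{k}(x)$) is an accurate refinement of the same argument.
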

\begin{proof}
By induction on $k$.
\end{proof}
\begin{lemma}\label{lemma:sigma:tau:imply:box:x}
For all $k\in\N$,
\begin{enumerate}
\item $(\sigma_{k}(x)\rightarrow\boxplus\sigma_{k}(x))\in\KB$,
\item $(\neg\tau_{k}(x)\rightarrow\boxminus\neg\tau_{k}(x))\in\KB$.
\end{enumerate}
\end{lemma}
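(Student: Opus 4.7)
The plan is to prove the two parts in parallel by induction on $k$, mirroring the definitions of $\sigma_k$ and $\tau_k$. For the base case $k=0$, we have $\sigma_0(x)=\bot$ and $\neg\tau_0(x)\leftrightarrow\bot$, so both implications $\bot\to\boxplus\bot$ and $\bot\to\boxminus\bot$ are in $\KB$ trivially.

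Before doing the inductive step, I would record two elementary observations about the defined operators $\boxplus$ and $\boxminus$, both of which follow immediately from the corresponding facts about $\Box$ together with propositional reasoning on the conditional antecedents involving $p$ and $q$: (i) monotonicity, i.e.\ if $(\varphi\to\psi)\in\KB$ then $(\boxplus\varphi\to\boxplus\psi)\in\KB$, and similarly for $\boxminus$; and (ii) distributivity over conjunction, i.e.\ $\boxplus(\varphi\wedge\psi)\leftrightarrow(\boxplus\varphi\wedge\boxplus\psi)$ lies in $\KB$, and similarly for $\boxminus$.

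For the inductive step of part~(1), unfold $\sigma_{k+1}(x)=x\wedge\boxplus\sigma_k(x)$ and apply (ii) to rewrite $\boxplus\sigma_{k+1}(x)$ as $\boxplus x\wedge\boxplus^2\sigma_k(x)$ modulo $\KB$. It then suffices to derive each conjunct from $x\wedge\boxplus\sigma_k(x)$. The conjunct $\boxplus x$ comes from $\boxplus\sigma_k(x)$ by applying monotonicity to Lemma~\ref{lemma:sigma:tau:imply:x}(1); the conjunct $\boxplus^2\sigma_k(x)$ comes from $\boxplus\sigma_k(x)$ by applying monotonicity to the induction hypothesis. Part~(2) is handled symmetrically: since $\neg\tau_{k+1}(x)\leftrightarrow\neg x\wedge\boxminus\neg\tau_k(x)$, distributivity reduces the goal to deriving the two conjuncts $\boxminus\neg x$ and $\boxminus^2\neg\tau_k(x)$ of $\boxminus\neg\tau_{k+1}(x)$, which follow respectively from Lemma~\ref{lemma:sigma:tau:imply:x}(2) and the induction hypothesis, each by monotonicity of $\boxminus$.

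The only nonroutine step is verifying the auxiliary facts (i) and (ii) for $\boxplus$ and $\boxminus$; once these are in hand, the induction is essentially mechanical, so I do not anticipate a substantial obstacle. One could alternatively attempt to derive part~(2) from part~(1) via Lemma~\ref{proposition:tense:modalities}, but that lemma converts $\sigma_k(x)\to\boxplus\sigma_k(x)$ into $\neg\sigma_k(x)\to\boxminus\neg\sigma_k(x)$, which is not the statement about $\tau_k$ we want, so running the two inductions in parallel appears to be the cleanest route.
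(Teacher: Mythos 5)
Your proof is correct and follows the same route as the paper, which simply says ``by induction on $k$'': the base case is trivial since $\sigma_{0}(x)=\bot$ and $\neg\tau_{0}(x)$ is equivalent to $\bot$, and the inductive step goes through exactly as you describe, using monotonicity and $\wedge$-distributivity of $\boxplus$ and $\boxminus$ (both immediate from normality of $\Box$) together with Lemma~\ref{lemma:sigma:tau:imply:x} and the induction hypothesis. Your closing remark is also apt: Lemma~\ref{proposition:tense:modalities} relates $\sigma_{k}$ to itself rather than to $\tau_{k}$, so the two parts do need parallel inductions.
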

\begin{proof}
By induction on $k$.
\end{proof}
\begin{lemma}\label{lemma:sigma:tau:imply:box:bot:bot}
For all $k,l\in\N$,
\begin{enumerate}
\item if $k\leq l$ then $(\sigma_{k}(x)\rightarrow\boxplus^{l}\bot)\in\KB$,
\item if $k\leq l$ then $(\neg\tau_{k}(x)\rightarrow\boxminus^{l}\bot)\in\KB$.
\end{enumerate}
\end{lemma}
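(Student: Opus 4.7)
The proof will go by induction on $k$, uniformly for both parts.

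For the base case $k=0$, the definitions give $\sigma_{0}(x)=\bot$ and $\tau_{0}(x)=\top$, so $\neg\tau_{0}(x)=\bot$. In both cases the antecedent of the implication is $\bot$, hence $(\sigma_{0}(x)\rightarrow\boxplus^{l}\bot)$ and $(\neg\tau_{0}(x)\rightarrow\boxminus^{l}\bot)$ belong to $\KB$ for every $l\in\N$ trivially.

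For the inductive step, assume the statement holds for $k$ and suppose $k+1\leq l$, so that $l\geq 1$ and $k\leq l-1$. Unfolding the definitions:
\begin{itemize}
\item $\sigma_{k+1}(x)=x\wedge\boxplus\sigma_{k}(x)$, from which we get propositionally $(\sigma_{k+1}(x)\rightarrow\boxplus\sigma_{k}(x))\in\KB$;
\item $\neg\tau_{k+1}(x)$ is, propositionally, equivalent to $\neg x\wedge\boxminus\neg\tau_{k}(x)$, so $(\neg\tau_{k+1}(x)\rightarrow\boxminus\neg\tau_{k}(x))\in\KB$.
\end{itemize}
The induction hypothesis applied with $l-1$ in place of $l$ yields $(\sigma_{k}(x)\rightarrow\boxplus^{l-1}\bot)\in\KB$ for part~$(1)$ and $(\neg\tau_{k}(x)\rightarrow\boxminus^{l-1}\bot)\in\KB$ for part~$(2)$. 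It remains to push these implications under one $\boxplus$, respectively one $\boxminus$, to obtain $(\boxplus\sigma_{k}(x)\rightarrow\boxplus^{l}\bot)\in\KB$ and $(\boxminus\neg\tau_{k}(x)\rightarrow\boxminus^{l}\bot)\in\KB$, which combine with the above to close the induction.

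The one auxiliary fact needed is therefore the monotonicity of the defined modal connectives: if $(\varphi\rightarrow\psi)\in\KB$ then $(\boxplus\varphi\rightarrow\boxplus\psi)\in\KB$ and $(\boxminus\varphi\rightarrow\boxminus\psi)\in\KB$. This is straightforward: unfolding the definition of $\boxplus$, one applies necessitation and the $K$-axiom in turn to each of the three nested occurrences of $\Box$, using that the surrounding implications with parametric antecedents $p^{i}\wedge q^{j}$ are preserved by classical propositional reasoning; the argument for $\boxminus$ is identical. Consequently no step of the proof presents any real obstacle; the only point worth being careful about is the correct use of the induction hypothesis at the index $l-1$, which is legitimate precisely because the assumption $k+1\leq l$ forces $l\geq 1$.
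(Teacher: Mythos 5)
Your proof is correct and follows the same route the paper indicates, namely induction on $k$ (with $l$ universally quantified in the induction hypothesis, so that instantiating it at $l-1$ is legitimate); the monotonicity of $\boxplus$ and $\boxminus$ that you invoke is indeed the only auxiliary fact needed and holds because $x$ occurs positively in these defined connectives and $\KB$ is a normal modal logic.
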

\begin{proof}
By induction on $k$.
\end{proof}
\begin{lemma}\label{lemma:sigma:tau:imply:box:bot:bot:k:greater:than:l}
For all $k,l\in\N$,
\begin{enumerate}
\item if $k>l$ then $(\sigma_{k}(x)\rightarrow\boxplus^{l}\bot)\not\in\KB$,
\item if $k>l$ then $(\neg\tau_{k}(x)\rightarrow\boxminus^{l}\bot)\not\in\KB$.
\end{enumerate}
\end{lemma}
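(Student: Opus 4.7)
The plan is to adapt the counter-models used in the proofs of Lemmas~\ref{easy:lemma:b} and \ref{lemma:about:k:l:and:boxes} by choosing a valuation for $x$ that makes $\sigma_{k}(x)$ true at the leftmost state of a finite chain while $\boxplus^{l}\bot$ continues to fail there. Concretely, for part $(1)$, fix $k>l$ and let $F=(W,R)$ with $W=\{0,\ldots,3k-1\}$ and $R=\{(i,j):\ {\mid}j-i{\mid}\leq1\}$; this frame is symmetric, and in fact reflexive. Define $M=(W,R,V)$ by setting $V(p)=\{i\in W:\ i\equiv1\pmod 3\}$, $V(q)=\{i\in W:\ i\equiv2\pmod 3\}$, $V(x)=\{i\in W:\ i\equiv0\pmod 3\}$, and $V(\alpha)=\emptyset$ for every other atom $\alpha$. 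The point of this construction is that from each state $3j$ (which carries $p^{0}\wedge q^{0}$) the $\boxplus$-pattern forces the unique 3-step path $3j\to 3j+1\to 3j+2\to 3j+3$ whenever $3j+3\in W$; otherwise the pattern cannot be completed and $\boxplus\psi$ holds vacuously at $3j$ for every $\psi$.

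I would then prove, by induction on $i\in\{1,\ldots,k\}$, that $M,3(k-i)\models\sigma_{i}(x)$. The base case $i=1$ uses that $\sigma_{1}(x)=x\wedge\boxplus\bot$, that $3(k-1)\in V(x)$, and that the $\boxplus$-pattern starting at $3(k-1)$ would end at $3k\not\in W$, making $\boxplus\bot$ vacuously true there. The inductive step at $3(k-i-1)$ combines $3(k-i-1)\in V(x)$ with the observation that the $\boxplus$-pattern from $3(k-i-1)$ terminates uniquely at $3(k-i)$, where $\sigma_{i}(x)$ holds by the induction hypothesis. Taking $i=k$ yields $M,0\models\sigma_{k}(x)$. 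On the other hand, a similar path-counting argument shows that $\boxplus^{l}\bot$ at state $0$ reduces to evaluating $\bot$ at state $3l$, which is in $W$ because $3l\leq 3(k-1)<3k$; hence $M,0\not\models\boxplus^{l}\bot$, establishing $(\sigma_{k}(x)\rightarrow\boxplus^{l}\bot)\not\in\KB$.

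Part $(2)$ is handled by the mirror construction: take $W=\{1,\ldots,3k\}$ (with the same $R$, $V(p)$, $V(q)$) and $V(x)=\emptyset$, then show by an analogous induction that $M,3k\models\neg\tau_{k}(x)$, while $\boxminus^{l}\bot$ fails at $3k$ because the descending $\boxminus$-pattern of length $l$ lands on state $3(k-l)\in W$. The main technical obstacle is purely bookkeeping: at each inductive step one must check that the $\boxplus$- or $\boxminus$-pattern, interpreted as three nested $\Box$s conditioned on the $p^{i}q^{j}$ labels, picks out exactly one successor in the linear chain and stays inside $W$ until the terminal step forces the desired vacuous (for $\sigma_{k}$ and $\neg\tau_{k}$) or non-vacuous (for $\boxplus^{l}\bot$ and $\boxminus^{l}\bot$) truth value.
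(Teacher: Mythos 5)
Your proof is correct, but it takes a genuinely different route from the paper. The paper argues purely syntactically: assuming $(\sigma_{k}(x)\rightarrow\boxplus^{l}\bot)\in\KB$, it applies the substitution $x\mapsto\top$ (resp.\ $x\mapsto\bot$ for part~$(2)$), invokes Lemma~\ref{lemma:to:be:used:later} to obtain $(\boxplus^{k}\bot\rightarrow\upsilon(\sigma_{k}(x)))\in\KB$ via Lemma~\ref{easy:lemma:a}, and chains implications to conclude $(\boxplus^{k}\bot\rightarrow\boxplus^{l}\bot)\in\KB$, contradicting Lemma~\ref{lemma:about:k:l:and:boxes}; no new model is built. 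You instead construct an explicit counter-model by enriching the chain frames of Lemmas~\ref{easy:lemma:b} and~\ref{lemma:about:k:l:and:boxes} with a valuation $V(x)=\{i:\ i\equiv0\pmod 3\}$ (resp.\ $V(x)=\emptyset$) and proving by induction that $\sigma_{k}(x)$ holds at the left end while $\boxplus^{l}\bot$ fails there. I checked the key points: the mod-$3$ labelling does make the $\boxplus$-pattern deterministic and strictly forward even though $R$ is symmetric (from $3j$ the only $p^{1}\wedge q^{0}$-successor is $3j+1$, etc.), the truncation of $W$ at $3k-1$ (resp.\ at $1$) makes the terminal $\boxplus\bot$ (resp.\ $\boxminus\bot$) vacuously true, and $k>l$ guarantees that the state $3l$ (resp.\ $3(k-l)$) witnessing the failure of $\boxplus^{l}\bot$ (resp.\ $\boxminus^{l}\bot$) lies inside $W$. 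What the paper's approach buys is brevity and reuse of already-established lemmas; what yours buys is a self-contained, semantically transparent argument that exhibits concretely where $\sigma_{k}(x)$ is satisfiable beyond depth $l$~---~at the cost of redoing, with the extra induction on $\sigma_{i}(x)$, the bookkeeping that Lemma~\ref{lemma:about:k:l:and:boxes} already encapsulates.
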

\begin{proof}
Let $k,l\in\N$.
\\
$(1)$: Suppose $k>l$ and $(\sigma_{k}(x)\rightarrow\boxplus^{l}\bot)\in\KB$.
Let $\upsilon$ be the substitution defined as follows:
\begin{itemize}
\item $\upsilon(x)=\top$,
\item for all variables $y$ distinct from $x$, $\upsilon(y)=y$.
\end{itemize}
Since $(\sigma_{k}(x)\rightarrow\boxplus^{l}\bot)\in\KB$, therefore $(\upsilon(\sigma_{k}(x))\rightarrow\boxplus^{l}\bot)\in\KB$.
By Lemma~\ref{lemma:to:be:used:later}, $(\boxplus^{<k}x\wedge\boxplus^{k}\bot\rightarrow\sigma_{k}(x))\in\KB$.
Hence, $(\boxplus^{<k}\upsilon(x)\wedge\boxplus^{k}\bot\rightarrow\upsilon(\sigma_{k}(x)))\in\KB$.
Since $\upsilon(x)=\top$, therefore by Lemma~\ref{easy:lemma:a}, $(\boxplus^{k}\bot\rightarrow\upsilon(\sigma_{k}(x)))\in\KB$.
Since $(\upsilon(\sigma_{k}(x))\rightarrow\boxplus^{l}\bot)\in\KB$, therefore $(\boxplus^{k}\bot\rightarrow\boxplus^{l}\bot)\in\KB$.
Thus, by Lemma~\ref{lemma:about:k:l:and:boxes}, $k\not>l$: a contradiction.
\\
$(2)$: Suppose $k>l$ and $(\neg\tau_{k}(x)\rightarrow\boxminus^{l}\bot)\in\KB$.
Let $\upsilon$ be the substitution defined as follows:
\begin{itemize}
\item $\upsilon(x)=\bot$,
\item for all variables $y$ distinct from $x$, $\upsilon(y)=y$.
\end{itemize}
Since $(\neg\tau_{k}(x)\rightarrow\boxminus^{l}\bot)\in\KB$, therefore $(\upsilon(\neg\tau_{k}(x))\rightarrow\boxminus^{l}\bot)\in\KB$.
By Lemma~\ref{lemma:to:be:used:later}, $(\boxminus^{<k}\neg x\wedge\boxminus^{k}\bot\rightarrow\neg\tau_{k}(x))\in\KB$.
Hence, $(\boxminus^{<k}\neg\upsilon(x)\wedge\boxminus^{k}\bot\rightarrow\upsilon(\neg\tau_{k}(x)))\in\KB$.
Since $\upsilon(x)=\bot$, therefore by Lemma~\ref{easy:lemma:a}, $(\boxminus^{k}\bot\rightarrow\upsilon(\neg\tau_{k}(x)))\in\KB$.
Since $(\upsilon(\neg\tau_{k}(x))\rightarrow\boxminus^{l}\bot)\in\KB$, therefore $(\boxminus^{k}\bot\rightarrow\boxminus^{l}\bot)\in\KB$.
Thus, by Lemma~\ref{lemma:about:k:l:and:boxes}, $k\not>l$: a contradiction.
\end{proof}
\begin{lemma}\label{lemma:sigma:tau:imply:not:the:case:this:time}
For all $k,l\in\N$,
\begin{enumerate}
\item $(\boxplus^{k}\bot\vee\neg\tau_{l}(x))\not\in\KB$,
\item $(\boxminus^{k}\bot\vee\sigma_{l}(x))\not\in\KB$.
\end{enumerate}
\end{lemma}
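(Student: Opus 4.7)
The plan is to reuse the counter-models constructed in the proof of Lemma~\ref{easy:lemma:b} and simply augment them with a convenient valuation of the variable $x$. The key observation is that the substitutions $\tau_{l}$ and $\sigma_{l}$ behave trivially when $x$ is set to a constant: $\tau_{l}(x)$ reduces to $\top$ when $V(x)=W$, and $\sigma_{l}(x)$ reduces to $\bot$ when $V(x)=\emptyset$, regardless of the frame or the value of $l$.

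For item~$(1)$, I would let $F=(W,R)$ be the reflexive symmetric frame with $W=\{0,\ldots,3k\}$ and $R=\{(i,j)\colon{\mid}j-i{\mid}\leq1\}$ used in Lemma~\ref{easy:lemma:b}, and consider the model $M=(W,R,V)$ with $V(p)=\{i\colon i=1\bmod3\}$, $V(q)=\{i\colon i=2\bmod3\}$, $V(x)=W$, and $V(\alpha)=\emptyset$ for every other atom $\alpha$. From Lemma~\ref{easy:lemma:b} one already gets $M,0\not\models\boxplus^{k}\bot$. It then suffices to prove by induction on $l$ that $M,s\models\tau_{l}(x)$ for every $s\in W$: the base case $\tau_{0}(x)=\top$ is immediate, and in the inductive step $\tau_{l+1}(x)=\neg(\neg x\wedge\boxminus\neg\tau_{l}(x))$ is forced to be true everywhere because $V(x)=W$ makes $\neg x$ false throughout. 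Consequently $M,0\not\models\neg\tau_{l}(x)$, so $M,0\not\models(\boxplus^{k}\bot\vee\neg\tau_{l}(x))$, whence $(\boxplus^{k}\bot\vee\neg\tau_{l}(x))\not\in\KB$.

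For item~$(2)$, I would use the same frame $F$, the same valuation for $p$, $q$ and the other atoms, but now set $V(x)=\emptyset$. Lemma~\ref{easy:lemma:b} still gives $M,3k\not\models\boxminus^{k}\bot$. A parallel induction on $l$ shows $M,s\not\models\sigma_{l}(x)$ for every $s\in W$: the base case $\sigma_{0}(x)=\bot$ is trivial, and in the inductive step $\sigma_{l+1}(x)=x\wedge\boxplus\sigma_{l}(x)$ fails everywhere because $V(x)=\emptyset$ kills the first conjunct. Therefore $M,3k\not\models(\boxminus^{k}\bot\vee\sigma_{l}(x))$, and so $(\boxminus^{k}\bot\vee\sigma_{l}(x))\not\in\KB$.

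I do not anticipate any real obstacle in carrying this out. The only point that deserves some care is verifying that the auxiliary inductions on $l$ go through uniformly in $k$, but this is just the remark that the substitutions $\tau_{l}$ and $\sigma_{l}$ act only on $x$ and collapse to constants under the chosen valuation, independently of the frame in which we evaluate them.
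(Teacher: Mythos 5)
Your proof is correct, but it takes a different route from the paper's. The paper argues syntactically: assuming $(\boxplus^{k}\bot\vee\neg\tau_{l}(x))\in\KB$, it invokes Lemma~\ref{lemma:sigma:tau:imply:x} to weaken $\neg\tau_{l}(x)$ to $\neg x$, obtaining $(\boxplus^{k}\bot\vee\neg x)\in\KB$, then applies the substitution $x\mapsto\top$ (using closure of $\KB$ under substitution) to conclude $\boxplus^{k}\bot\in\KB$, contradicting Lemma~\ref{easy:lemma:b} used as a black box; item $(2)$ is symmetric with $x\mapsto\bot$. You instead build an explicit countermodel, re-using the frame from the proof of Lemma~\ref{easy:lemma:b} and extending its valuation by $V(x)=W$ (resp.\ $V(x)=\emptyset$), then showing by a fresh induction on $l$ that $\tau_{l}(x)$ holds everywhere (resp.\ $\sigma_{l}(x)$ fails everywhere). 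Both arguments are sound and both ultimately rest on the same countermodel; yours avoids Lemma~\ref{lemma:sigma:tau:imply:x} at the cost of redoing the induction that that lemma (together with substitution closure) packages up, while the paper's version is shorter because it recycles already established facts. One small point you leave implicit and should state: changing $V(x)$ does not disturb the conclusion $M,0\not\models\boxplus^{k}\bot$ (resp.\ $M,3k\not\models\boxminus^{k}\bot$) of Lemma~\ref{easy:lemma:b} because $\boxplus^{k}\bot$ and $\boxminus^{k}\bot$ contain no occurrence of $x$. Also note that your frame is reflexive, so your variant preserves the adaptation to $\KDB$ and $\KTB$ discussed in the conclusion.
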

\begin{proof}
Let $k,l\in\N$.
\\
$(1)$: Suppose $(\boxplus^{k}\bot\vee\neg\tau_{l}(x))\in\KB$.
By Lemma~\ref{lemma:sigma:tau:imply:x}, $(\neg\tau_{l}(x)\rightarrow\neg x)\in\KB$.
Since $(\boxplus^{k}\bot\vee\neg\tau_{l}(x))\in\KB$, therefore $(\boxplus^{k}\bot\vee\neg x)\in\KB$.
Let $\upsilon$ be the substitution defined as follows:
\begin{itemize}
\item $\upsilon(x)=\top$,
\item for all variables $y$ distinct from $x$, $\upsilon(y)=y$.
\end{itemize}
Since $(\boxplus^{k}\bot\vee\neg x)\in\KB$, therefore $(\boxplus^{k}\bot\vee\neg\upsilon(x))\in\KB$.
Since $\upsilon(x)=\top$, therefore $\boxplus^{k}\bot\in\KB$: a contradiction with Lemma~\ref{easy:lemma:b}.
\\
$(2)$: Suppose $(\boxminus^{k}\bot\vee\sigma_{l}(x))\in\KB$.
By Lemma~\ref{lemma:sigma:tau:imply:x}, $(\sigma_{l}(x)\rightarrow x)\in\KB$.
Since $(\boxminus^{k}\bot\vee\sigma_{l}(x))\in\KB$, therefore $(\boxminus^{k}\bot\vee x)\in\KB$.
Let $\upsilon$ be the substitution defined as follows:
\begin{itemize}
\item $\upsilon(x)=\bot$,
\item for all variables $y$ distinct from $x$, $\upsilon(y)=y$.
\end{itemize}
Since $(\boxminus^{k}\bot\vee x)\in\KB$, therefore $(\boxminus^{k}\bot\vee\upsilon(x))\in\KB$.
Since $\upsilon(x)=\bot$, therefore $\boxminus^{k}\bot\in\KB$: a contradiction with Lemma~\ref{easy:lemma:b}.
\end{proof}
\begin{lemma}\label{lemma:sigma:lambda:k:l:and:also:tau:mu:k:l:pre:1}
For all $k,l\in\N$,
\begin{enumerate}
\item if $k\leq l$ then $(\boxplus^{k}\bot\wedge\sigma_{l}(x)\leftrightarrow\sigma_{k}(x))$,
\item if $k\leq l$ then $(\boxminus^{k}\bot\wedge\neg\tau_{l}(x)\leftrightarrow\neg\tau_{k}(x))$,
\end{enumerate}
\end{lemma}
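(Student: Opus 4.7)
The plan is to prove both parts simultaneously by induction on $k$, exploiting the parallel between part~(1) (with $\boxplus$, $\sigma$, $x$) and part~(2) (with $\boxminus$, $\neg\tau$, $\neg x$), and treating ``$\leftrightarrow$'' as shorthand for ``the biconditional is in $\KB$'' as in the other lemmas.

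For the base case $k=0$, observe that $\boxplus^{0}\bot=\bot$ and $\sigma_{0}(x)=\bot$, so part~(1) collapses to $(\bot\wedge\sigma_{l}(x)\leftrightarrow\bot)$, a propositional tautology hence in $\KB$. Part~(2) is analogous, using $\boxminus^{0}\bot=\bot$ and $\neg\tau_{0}(x)=\neg\top\leftrightarrow\bot$.

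For the induction step, assume both parts hold at $k$ (for every $l'\geq k$) and take $l\geq k+1$, writing $l=l'+1$ with $l'\geq k$. The definitions give $\boxplus^{k+1}\bot=\boxplus\boxplus^{k}\bot$, $\sigma_{l'+1}(x)=x\wedge\boxplus\sigma_{l'}(x)$ and $\sigma_{k+1}(x)=x\wedge\boxplus\sigma_{k}(x)$. The intended chain of $\KB$-equivalences is
\[
\boxplus^{k+1}\bot\wedge\sigma_{l'+1}(x)
\;\leftrightarrow\; x\wedge(\boxplus\boxplus^{k}\bot\wedge\boxplus\sigma_{l'}(x))
\;\leftrightarrow\; x\wedge\boxplus(\boxplus^{k}\bot\wedge\sigma_{l'}(x))
\;\leftrightarrow\; x\wedge\boxplus\sigma_{k}(x)
\;=\;\sigma_{k+1}(x),
\]
where the second step uses that $\boxplus$ distributes over conjunction modulo $\KB$, and the third step uses that $\boxplus$ respects $\KB$-equivalence together with the induction hypothesis applied at $l'\geq k$. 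Part~(2) proceeds identically after first noting $\neg\tau_{l'+1}(x)\leftrightarrow\neg x\wedge\boxminus\neg\tau_{l'}(x)$ and similarly for $\neg\tau_{k+1}$.

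The only nontrivial ingredients are distributivity of $\boxplus$ (respectively $\boxminus$) over $\wedge$ and preservation of equivalence under $\boxplus$ (respectively $\boxminus$) in $\KB$. These are not among the previously stated lemmas, but since $\boxplus$ and $\boxminus$ are built purely from $\Box$, Boolean connectives, and parameter literals, both properties follow immediately from the corresponding facts for $\Box$ in any normal modal logic: repeated applications of $\Box(\alpha\wedge\beta)\leftrightarrow\Box\alpha\wedge\Box\beta$ and $\Box(A\to\beta)\wedge\Box(A\to\gamma)\leftrightarrow\Box(A\to\beta\wedge\gamma)$ through the three nested $\Box$-layers of the definition of $\boxplus$. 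This bookkeeping is the only real obstacle; I would either quote it inline or extract it as a brief preliminary sub-lemma before running the induction.
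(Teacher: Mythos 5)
Your proposal is correct and follows the same route as the paper, whose proof is simply ``by induction on $k$'': the base case is a propositional tautology since $\sigma_{0}(x)=\bot$ and $\neg\tau_{0}(x)\leftrightarrow\bot$, and the inductive step is exactly the unfolding $\sigma_{k+1}(x)=x\wedge\boxplus\sigma_{k}(x)$ combined with distributivity of $\boxplus$ over $\wedge$ and congruence of $\boxplus$, both of which do hold for $\boxplus$ and $\boxminus$ in any normal modal logic for the reason you give. No gaps.
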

\begin{proof}
By induction on $k$.
\end{proof}
For all $k\in\N$, let $\lambda_{k}$ and $\mu_{k}$ be the substitutions defined as follows:
\begin{itemize}
\item $\lambda_{k}(x)=(x\wedge\boxplus^{k}\bot)$,
\item for all variables $y$ distinct from $x$, $\lambda_{k}(y)=y$,
\item $\mu_{k}(x)=\neg(\neg x\wedge\boxminus^{k}\bot)$,
\item for all variables $y$ distinct from $x$, $\mu_{k}(y)=y$.
\end{itemize}
\begin{lemma}\label{lemma:sigma:lambda:k:l:and:also:tau:mu:k:l:pre:2}
For all $k,l\in\N$,
\begin{enumerate}
\item if $k\leq l$ then $(\lambda_{l}(\sigma_{k}(x))\leftrightarrow\sigma_{k}(x))\in\KB$,
\item if $k\leq l$ then $(\mu_{l}(\tau_{k}(x))\leftrightarrow\tau_{k}(x))\in\KB$.
\end{enumerate}
\end{lemma}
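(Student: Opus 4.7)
The plan is to prove both parts simultaneously by induction on $k$, with $l$ treated as a parameter ranging over integers $\geq k$. The base case $k=0$ is trivial: $\sigma_{0}(x)=\bot$ is fixed by the substitution $\lambda_{l}$, and dually $\tau_{0}(x)=\top$ is fixed by $\mu_{l}$, so both biconditionals hold outright in $\KB$.

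For the inductive step of part $(1)$, fix $l\geq k+1$ (hence also $l\geq k$, which justifies invoking the induction hypothesis at level $k$ for the same $l$). Since $\lambda_{l}$ fixes the parameters $p,q$ and every variable other than $x$, it commutes with the Boolean connectives and with $\Box$, hence also with $\boxplus$. Therefore
$$\lambda_{l}(\sigma_{k+1}(x))=\lambda_{l}(x)\wedge\boxplus\lambda_{l}(\sigma_{k}(x))=(x\wedge\boxplus^{l}\bot)\wedge\boxplus\lambda_{l}(\sigma_{k}(x)),$$
which, by the induction hypothesis, is $\KB$-equivalent to $\sigma_{k+1}(x)\wedge\boxplus^{l}\bot$. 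The left-to-right half of the target equivalence is then pure weakening, while the right-to-left half follows from Lemma~\ref{lemma:sigma:tau:imply:box:bot:bot}$(1)$: since $k+1\leq l$, $(\sigma_{k+1}(x)\rightarrow\boxplus^{l}\bot)\in\KB$, so the extra conjunct $\boxplus^{l}\bot$ is redundant.

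Part $(2)$ is entirely dual, obtained by replacing $\lambda_{l}$, $\sigma_{k}$, $\boxplus$ and $\boxplus^{l}\bot$ by $\mu_{l}$, $\tau_{k}$, $\boxminus$ and $\boxminus^{l}\bot$ respectively, unfolding $\tau_{k+1}(x)=\neg(\neg x\wedge\boxminus\neg\tau_{k}(x))$ in place of $\sigma_{k+1}(x)=x\wedge\boxplus\sigma_{k}(x)$, and invoking Lemma~\ref{lemma:sigma:tau:imply:box:bot:bot}$(2)$ in place of Lemma~\ref{lemma:sigma:tau:imply:box:bot:bot}$(1)$. I do not anticipate any real obstacle: the whole content of the lemma is the observation that when $k\leq l$ the conjunct $\boxplus^{l}\bot$ (respectively $\boxminus^{l}\bot$) that $\lambda_{l}$ (respectively $\mu_{l}$) grafts onto $x$ is already $\KB$-entailed by $\sigma_{k}(x)$ (respectively $\neg\tau_{k}(x)$), so the substitution acts as the identity up to $\KB$-equivalence.
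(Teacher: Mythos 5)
Your proof is correct and follows the same route as the paper, which simply states ``by induction on $k$'': the base case is immediate since $\lambda_{l}$ and $\mu_{l}$ fix $\bot$ and $\top$, and the inductive step correctly combines commutation of the substitution with the connectives, replacement of equivalents in $\KB$, and Lemma~\ref{lemma:sigma:tau:imply:box:bot:bot} to discharge the extra conjunct $\boxplus^{l}\bot$ (resp.\ $\boxminus^{l}\bot$). Nothing further is needed.
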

\begin{proof}
By induction on $k$.
\end{proof}
\begin{lemma}\label{lemma:sigma:lambda:k:l:and:also:tau:mu:k:l:pre:3}
For all $k,l\in\N$,
\begin{enumerate}
\item if $k\geq l$ then $(\lambda_{l}(\sigma_{k}(x))\leftrightarrow\sigma_{l}(x))\in\KB$,
\item if $k\geq l$ then $(\mu_{l}(\tau_{k}(x))\leftrightarrow\tau_{l}(x))\in\KB$.
\end{enumerate}
\end{lemma}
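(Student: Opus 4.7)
I would prove part~(1) by induction on $k$ with $l$ fixed, restricted to $k\geq l$; part~(2) is the symmetric statement, handled by the same argument after replacing $\sigma$, $\lambda$, $\boxplus$ by $\tau$, $\mu$, $\boxminus$ and using the $\neg$-dual unfolding $\neg\tau_{k+1}(x)=\neg x\wedge\boxminus\neg\tau_{k}(x)$. The base case $k=l$ of part~(1) is already supplied by Lemma~\ref{lemma:sigma:lambda:k:l:and:also:tau:mu:k:l:pre:2}(1), which gives $\lambda_{l}(\sigma_{l}(x))\leftrightarrow\sigma_{l}(x)\in\KB$.

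For the inductive step, I would start from $\sigma_{k+1}(x)=x\wedge\boxplus\sigma_{k}(x)$ and observe that the substitution $\lambda_{l}$ moves only the variable $x$, leaves every parameter (in particular $p$ and $q$) untouched, and therefore commutes with $\wedge$ and with $\boxplus$. Unfolding and then applying the induction hypothesis under the $\boxplus$ (by normal-modal reasoning for $\boxplus$) yields
\[
\lambda_{l}(\sigma_{k+1}(x))\leftrightarrow(x\wedge\boxplus^{l}\bot)\wedge\boxplus\sigma_{l}(x)\in\KB.
\]

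The crux is then to identify $(x\wedge\boxplus^{l}\bot)\wedge\boxplus\sigma_{l}(x)$ with $\sigma_{l}(x)$ modulo $\KB$, and here I would invoke Lemma~\ref{lemma:sigma:lambda:k:l:and:also:tau:mu:k:l:pre:1}(1) with its ``$k$'' instantiated to $l$ and its ``$l$'' instantiated to $l+1$: the required inequality $l\leq l+1$ is trivial, and the conclusion $\boxplus^{l}\bot\wedge\sigma_{l+1}(x)\leftrightarrow\sigma_{l}(x)\in\KB$ becomes, after expanding $\sigma_{l+1}(x)=x\wedge\boxplus\sigma_{l}(x)$, exactly the auxiliary equivalence needed. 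This closes the induction, and the dual application of Lemma~\ref{lemma:sigma:lambda:k:l:and:also:tau:mu:k:l:pre:1}(2) with the same instantiation closes part~(2). The main obstacle I anticipate is spotting this specific re-use of Lemma~\ref{lemma:sigma:lambda:k:l:and:also:tau:mu:k:l:pre:1}; without it one would be forced to prove $(x\wedge\boxplus^{l}\bot)\wedge\boxplus\sigma_{l}(x)\leftrightarrow\sigma_{l}(x)$ from scratch, presumably by a separate induction on $l$ that pieces together Lemmas~\ref{lemma:sigma:tau:imply:x}, \ref{lemma:sigma:tau:imply:box:x} and~\ref{lemma:sigma:tau:imply:box:bot:bot}, which would be considerably more cumbersome and would not reuse the structural work already done in the preceding lemma.
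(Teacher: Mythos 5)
Your proof is correct and follows the same route as the paper, namely induction on $k$ (with base case $k=l$ discharged by Lemma~\ref{lemma:sigma:lambda:k:l:and:also:tau:mu:k:l:pre:2} and the inductive step closed by unfolding $\sigma_{k+1}$, using that $\lambda_{l}$ fixes the parameters $p,q$ and hence commutes with $\boxplus$, and then instantiating Lemma~\ref{lemma:sigma:lambda:k:l:and:also:tau:mu:k:l:pre:1} at $(l,l+1)$). The paper's proof is the one-line ``by induction on $k$,'' and your argument is a faithful, correctly detailed instantiation of exactly that plan, including the dual treatment of part~(2) via $\neg\tau_{k+1}(x)=\neg x\wedge\boxminus\neg\tau_{k}(x)$.
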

\begin{proof}
By induction on $k$.
\end{proof}
\begin{lemma}\label{lemma:sigma:lambda:k:l:and:also:tau:mu:k:l}
For all $k,l\in\N$,
\begin{enumerate}
\item if $k\leq l$ then $\sigma_{l}\circ\lambda_{k}\simeq\sigma_{k}$,
\item if $k\leq l$ then $\tau_{l}\circ\mu_{k}\simeq\tau_{k}$.
\end{enumerate}
\end{lemma}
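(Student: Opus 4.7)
The plan is to reduce everything to Lemma~\ref{lemma:sigma:lambda:k:l:and:also:tau:mu:k:l:pre:3} by unfolding the definition of $\simeq$ pointwise. First I would observe that each of the substitutions $\sigma_{m}$, $\tau_{m}$, $\lambda_{m}$, $\mu_{m}$ is the identity on every variable $y$ distinct from $x$. Consequently both $\sigma_{l}\circ\lambda_{k}$ and $\sigma_{k}$ send each such $y$ to itself, so the equivalence requirement $((\sigma_{l}\circ\lambda_{k})(y)\leftrightarrow\sigma_{k}(y))\in\KB$ degenerates to the tautology $y\leftrightarrow y$; the same remark disposes of the off-$x$ coordinates in part~(2). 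The only nontrivial instance of $\simeq$ is therefore the one at $x$.

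For part~(1), assume $k\leq l$. The definition of composition unfolds to $(\sigma_{l}\circ\lambda_{k})(x)=\lambda_{k}(\sigma_{l}(x))$, so what remains to prove is $(\lambda_{k}(\sigma_{l}(x))\leftrightarrow\sigma_{k}(x))\in\KB$. This is exactly Lemma~\ref{lemma:sigma:lambda:k:l:and:also:tau:mu:k:l:pre:3}(1), read after renaming its bound indices: playing its inner subscript as our $l$ and its outer subscript as our $k$, the hypothesis ``$k\geq l$'' in that lemma becomes our $l\geq k$, and its conclusion specialises to exactly $(\lambda_{k}(\sigma_{l}(x))\leftrightarrow\sigma_{k}(x))\in\KB$, as desired. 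Part~(2) is entirely parallel: unfolding $(\tau_{l}\circ\mu_{k})(x)=\mu_{k}(\tau_{l}(x))$ and invoking Lemma~\ref{lemma:sigma:lambda:k:l:and:also:tau:mu:k:l:pre:3}(2) with the same renaming yields $(\mu_{k}(\tau_{l}(x))\leftrightarrow\tau_{k}(x))\in\KB$.

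Because the genuine modal content has already been packaged in Lemma~\ref{lemma:sigma:lambda:k:l:and:also:tau:mu:k:l:pre:3}, I do not anticipate any real obstacle here: there is no induction to redo and no new semantic reasoning required. The only point that needs a bit of care is the bookkeeping of which index is ``inner'' and which is ``outer'' in the composition $\sigma_{l}\circ\lambda_{k}$, which is precisely where the hypothesis $k\leq l$ has to be flipped into the form $l\geq k$ demanded by pre:3; once that identification is made, the lemma follows immediately.
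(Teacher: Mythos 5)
Your proof is correct and takes exactly the route of the paper, whose entire proof is the citation ``By Lemma~\ref{lemma:sigma:lambda:k:l:and:also:tau:mu:k:l:pre:3}''. You have simply made explicit the bookkeeping the paper leaves implicit: that $(\sigma_{l}\circ\lambda_{k})(x)=\lambda_{k}(\sigma_{l}(x))$, that the off-$x$ coordinates are trivially handled, and that the hypothesis $k\leq l$ matches the hypothesis $k\geq l$ of the cited lemma after swapping the roles of its indices.
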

\begin{proof}
By Lemma~\ref{lemma:sigma:lambda:k:l:and:also:tau:mu:k:l:pre:3}.
\end{proof}
\begin{lemma}\label{lemma:0:K:q}
For all $k,l\in\N$,
\begin{enumerate}
\item if $k\leq l$ then $\sigma_{l}\preceq\sigma_{k}$,
\item if $k\leq l$ then $\tau_{l}\preceq\tau_{k}$.
\end{enumerate}
\end{lemma}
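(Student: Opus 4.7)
The plan is to observe that this lemma is an immediate consequence of Lemma~\ref{lemma:sigma:lambda:k:l:and:also:tau:mu:k:l} combined with the very definition of the preorder $\preceq$. Recall that $\sigma \preceq \tau$ holds if and only if there exists a substitution $\upsilon$ such that $\sigma \circ \upsilon \simeq \tau$. So to prove $\sigma_{l} \preceq \sigma_{k}$ (resp.\ $\tau_{l} \preceq \tau_{k}$) under the hypothesis $k \leq l$, it suffices to exhibit an explicit witness $\upsilon$.

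For part (1), I would simply take $\upsilon = \lambda_{k}$. Then Lemma~\ref{lemma:sigma:lambda:k:l:and:also:tau:mu:k:l}(1) under the assumption $k \leq l$ gives $\sigma_{l} \circ \lambda_{k} \simeq \sigma_{k}$, which is exactly what is needed. Likewise for part (2), I would take $\upsilon = \mu_{k}$, and Lemma~\ref{lemma:sigma:lambda:k:l:and:also:tau:mu:k:l}(2) gives $\tau_{l} \circ \mu_{k} \simeq \tau_{k}$, again under the hypothesis $k \leq l$.

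There is no real obstacle here since the technical content has already been extracted in the preceding chain of lemmas (in particular Lemmas~\ref{lemma:sigma:lambda:k:l:and:also:tau:mu:k:l:pre:1}--\ref{lemma:sigma:lambda:k:l:and:also:tau:mu:k:l:pre:3}, which were proved by induction on $k$ and then packaged into Lemma~\ref{lemma:sigma:lambda:k:l:and:also:tau:mu:k:l}). The proof is therefore essentially a one-line appeal to the definition of $\preceq$ together with the previous lemma, with no new induction or frame construction required.
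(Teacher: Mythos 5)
Your proposal is correct and matches the paper's own proof, which is just the citation ``By Lemma~\ref{lemma:sigma:lambda:k:l:and:also:tau:mu:k:l}''; you merely make explicit the witnesses $\lambda_{k}$ and $\mu_{k}$ and the appeal to the definition of $\preceq$.
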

\begin{proof}
By Lemma~\ref{lemma:sigma:lambda:k:l:and:also:tau:mu:k:l}.
\end{proof}
\begin{lemma}\label{lemma:0:K:r}
For all $k,l\in\N$,
\begin{enumerate}
\item if $k<l$ then $\sigma_{k}\not\preceq\sigma_{l}$,
\item if $k<l$ then $\tau_{k}\not\preceq\tau_{l}$.
\end{enumerate}
\end{lemma}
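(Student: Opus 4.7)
The plan is to argue by contradiction, exploiting the fact that $\sigma_k(x)$ forces $\boxplus^k\bot$ (when $k\leq k$, by Lemma~\ref{lemma:sigma:tau:imply:box:bot:bot}(1)) whereas $\sigma_l(x)$ does not force $\boxplus^k\bot$ once $l>k$ (by Lemma~\ref{lemma:sigma:tau:imply:box:bot:bot:k:greater:than:l}(1)). So the gap between these two facts will give the desired incompatibility between $\sigma_k$ and $\sigma_l$ under the ordering $\preceq$.

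Concretely, for part~(1), suppose $k<l$ and, for contradiction, that $\sigma_k\preceq\sigma_l$. Unfolding the definition, there is a substitution $\upsilon$ with $\sigma_k\circ\upsilon\simeq\sigma_l$, i.e.\ $(\upsilon(\sigma_k(x))\leftrightarrow\sigma_l(x))\in\KB$. Now apply $\upsilon$ to the $\KB$-theorem $(\sigma_k(x)\rightarrow\boxplus^k\bot)$ supplied by Lemma~\ref{lemma:sigma:tau:imply:box:bot:bot}(1) (using closure of $\KB$ under substitution). Since $\boxplus^k\bot$ mentions only the parameters $p,q$ and the constant $\bot$, the substitution $\upsilon$ leaves it fixed, so we obtain $(\upsilon(\sigma_k(x))\rightarrow\boxplus^k\bot)\in\KB$. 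Combining this with the equivalence above yields $(\sigma_l(x)\rightarrow\boxplus^k\bot)\in\KB$, which directly contradicts Lemma~\ref{lemma:sigma:tau:imply:box:bot:bot:k:greater:than:l}(1) applied with the roles of $k$ and $l$ swapped (since $l>k$).

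Part~(2) for $\tau_k$ is entirely symmetric: run the same argument with $\neg\tau_k(x)$ and $\boxminus^k\bot$ in place of $\sigma_k(x)$ and $\boxplus^k\bot$, invoking parts~(2) of Lemmas~\ref{lemma:sigma:tau:imply:box:bot:bot} and~\ref{lemma:sigma:tau:imply:box:bot:bot:k:greater:than:l} instead.

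There is essentially no obstacle: the heavy lifting has already been done in the preceding lemmas, which isolate the precise ``depth'' to which $\sigma_k(x)$ and $\tau_k(x)$ force $\boxplus$- and $\boxminus$-iterated inconsistency. The only point deserving attention is the invariance of $\boxplus^k\bot$ and $\boxminus^k\bot$ under arbitrary substitutions, which is what lets the substitution-closure of $\KB$ transport the forcing property across the witness $\upsilon$.
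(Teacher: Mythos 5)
Your proof is correct and follows essentially the same route as the paper's: unfold $\sigma_k\preceq\sigma_l$ to get a witness substitution, push it through the implication $(\sigma_k(x)\rightarrow\boxplus^k\bot)$ from Lemma~\ref{lemma:sigma:tau:imply:box:bot:bot}, and derive $(\sigma_l(x)\rightarrow\boxplus^k\bot)\in\KB$, contradicting Lemma~\ref{lemma:sigma:tau:imply:box:bot:bot:k:greater:than:l}. The only difference is that you explicitly justify why the witness substitution fixes $\boxplus^k\bot$ (it contains only parameters), a point the paper leaves implicit.
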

\begin{proof}
Let $k,l\in\N$.
\\
$(1)$: Suppose $k<l$ and $\sigma_{k}\preceq\sigma_{l}$.
Let $\lambda$ be a substitution such that $\sigma_{k}\circ\lambda\simeq\sigma_{l}$.
Hence, $(\lambda(\sigma_{k}(x))\leftrightarrow\sigma_{l}(x))\in\KB$.
By Lemma~\ref{lemma:sigma:tau:imply:box:bot:bot}, $(\sigma_{k}(x)\rightarrow\boxplus^{k}\bot)\in\KB$.
Thus, $(\lambda(\sigma_{k}(x))\rightarrow\boxplus^{k}\bot)\in\KB$.
Since $(\lambda(\sigma_{k}(x))\leftrightarrow\sigma_{l}(x))\in\KB$, therefore $(\sigma_{l}(x)\rightarrow\boxplus^{k}\bot)\in\KB$.
Consequently, by Lemma~\ref{lemma:sigma:tau:imply:box:bot:bot:k:greater:than:l}, $l\not>k$: a contradiction.
\\
$(2)$: Suppose $k<l$ and $\tau_{k}\preceq\tau_{l}$.
Let $\mu$ be a substitution such that $\tau_{k}\circ\mu\simeq\tau_{l}$.
Hence, $(\mu(\tau_{k}(x))\leftrightarrow\tau_{l}(x))\in\KB$.
By Lemma~\ref{lemma:sigma:tau:imply:box:bot:bot}, $(\neg\tau_{k}(x)\rightarrow\boxminus^{k}\bot)\in\KB$.
Thus, $(\mu(\neg\tau_{k}(x))\rightarrow\boxminus^{k}\bot)\in\KB$.
Since $(\mu(\tau_{k}(x))\leftrightarrow\tau_{l}(x))\in\KB$, therefore $(\neg\tau_{l}(x)\rightarrow\boxminus^{k}\bot)\in\KB$.
Consequently, by Lemma~\ref{lemma:sigma:tau:imply:box:bot:bot:k:greater:than:l}, $l\not>k$: a contradiction.
\end{proof}
\begin{lemma}\label{lemma:0:K:qr}
For all $k,l\in\N$,
\begin{enumerate}
\item $\sigma_{k}\not\preceq\tau_{l}$,
\item $\tau_{k}\not\preceq\sigma_{l}$.
\end{enumerate}
\end{lemma}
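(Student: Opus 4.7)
The plan is to derive contradictions in each case by combining the ``upper bound'' lemmas (Lemma~\ref{lemma:sigma:tau:imply:box:bot:bot}) stating that $\sigma_{k}(x)$ implies $\boxplus^{k}\bot$ and $\neg\tau_{k}(x)$ implies $\boxminus^{k}\bot$, with the ``non-validity'' lemmas (Lemma~\ref{lemma:sigma:tau:imply:not:the:case:this:time}) stating that neither $\boxplus^{k}\bot\vee\neg\tau_{l}(x)$ nor $\boxminus^{k}\bot\vee\sigma_{l}(x)$ belongs to $\KB$. This parallels the structure used in the proof of Lemma~\ref{lemma:0:K:r}.

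For part $(1)$, I would suppose for contradiction that $\sigma_{k}\preceq\tau_{l}$ and pick a substitution $\upsilon$ witnessing this, so that $(\upsilon(\sigma_{k}(x))\leftrightarrow\tau_{l}(x))\in\KB$. By Lemma~\ref{lemma:sigma:tau:imply:box:bot:bot}(1) (taking both indices equal to $k$), $(\sigma_{k}(x)\rightarrow\boxplus^{k}\bot)\in\KB$. Applying $\upsilon$ and using the crucial fact that $\boxplus^{k}\bot$ contains only the parameters $p,q$ and hence is fixed by every substitution, I obtain $(\upsilon(\sigma_{k}(x))\rightarrow\boxplus^{k}\bot)\in\KB$. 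Combining with the equivalence above yields $(\tau_{l}(x)\rightarrow\boxplus^{k}\bot)\in\KB$, i.e.\ $(\boxplus^{k}\bot\vee\neg\tau_{l}(x))\in\KB$, contradicting Lemma~\ref{lemma:sigma:tau:imply:not:the:case:this:time}(1).

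Part $(2)$ is handled by the dual argument. Assume $\tau_{k}\preceq\sigma_{l}$, pick $\mu$ with $\tau_{k}\circ\mu\simeq\sigma_{l}$, giving $(\mu(\tau_{k}(x))\leftrightarrow\sigma_{l}(x))\in\KB$ and hence $(\mu(\neg\tau_{k}(x))\leftrightarrow\neg\sigma_{l}(x))\in\KB$. By Lemma~\ref{lemma:sigma:tau:imply:box:bot:bot}(2), $(\neg\tau_{k}(x)\rightarrow\boxminus^{k}\bot)\in\KB$, so applying $\mu$ (which fixes $\boxminus^{k}\bot$) yields $(\mu(\neg\tau_{k}(x))\rightarrow\boxminus^{k}\bot)\in\KB$. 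Therefore $(\neg\sigma_{l}(x)\rightarrow\boxminus^{k}\bot)\in\KB$, i.e.\ $(\boxminus^{k}\bot\vee\sigma_{l}(x))\in\KB$, contradicting Lemma~\ref{lemma:sigma:tau:imply:not:the:case:this:time}(2).

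There is no serious obstacle: the proof is mechanical once one spots the right orientation. The only point requiring a moment's care is the observation that $\boxplus^{k}\bot$ and $\boxminus^{k}\bot$ are built from $\Box$ and from the parameters $p,q$ only, so no substitution $\upsilon$ or $\mu$ touches them. The key conceptual content was already packaged into Lemmas~\ref{lemma:sigma:tau:imply:box:bot:bot} and~\ref{lemma:sigma:tau:imply:not:the:case:this:time}, which express the asymmetry between the ``forward'' modal depth induced by $\sigma_{k}$ and the ``backward'' depth induced by $\tau_{l}$.
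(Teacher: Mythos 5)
Your proof is correct and follows essentially the same route as the paper's: assume $\sigma_{k}\preceq\tau_{l}$ (resp.\ $\tau_{k}\preceq\sigma_{l}$), apply Lemma~\ref{lemma:sigma:tau:imply:box:bot:bot} through the witnessing substitution, and contradict Lemma~\ref{lemma:sigma:tau:imply:not:the:case:this:time}. Your explicit remark that $\boxplus^{k}\bot$ and $\boxminus^{k}\bot$ are parameter-only and hence fixed by substitutions is left implicit in the paper but is exactly the right justification.
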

\begin{proof}
Let $k,l\in\N$.
\\
$(1)$: Suppose $\sigma_{k}\preceq\tau_{l}$.
Let $\upsilon$ be a substitution such that $\sigma_{k}\circ\upsilon\simeq\tau_{l}$.
Hence, $(\upsilon(\sigma_{k}(x))\leftrightarrow\tau_{l}(x))\in\KB$.
By Lemma~\ref{lemma:sigma:tau:imply:box:bot:bot}, $(\sigma_{k}(x)\rightarrow\boxplus^{k}\bot)\in\KB$.
Thus, $(\upsilon(\sigma_{k}(x))\rightarrow\boxplus^{k}\bot)\in\KB$.
Since $(\upsilon(\sigma_{k}(x))\leftrightarrow\tau_{l}(x))\in\KB$, therefore $(\boxplus^{k}\bot\vee\neg\tau_{l}(x))\in\KB$: a contradiction with Lemma~\ref{lemma:sigma:tau:imply:not:the:case:this:time}.
\\
$(2)$: Suppose $\tau_{k}\preceq\sigma_{l}$.
Let $\upsilon$ be a substitution such that $\tau_{k}\circ\upsilon\simeq\sigma_{l}$.
Hence, $(\upsilon(\tau_{k}(x))\leftrightarrow\sigma_{l}(x))\in\KB$.
By Lemma~\ref{lemma:sigma:tau:imply:box:bot:bot}, $(\neg\tau_{k}(x)\rightarrow\boxminus^{k}\bot)\in\KB$.
Thus, $(\upsilon(\neg\tau_{k}(x))\rightarrow\boxminus^{k}\bot)\in\KB$.
Since $(\upsilon(\tau_{k}(x))\leftrightarrow\sigma_{l}(x))\in\KB$, therefore $(\boxminus^{k}\bot\vee\sigma_{l}(x))\in\KB$: a contradiction with Lemma~\ref{lemma:sigma:tau:imply:not:the:case:this:time}.
\end{proof}
\section{About the nullariness of $\KB$}\label{section:KB:is:nullary}
In this section, we prove that the following formula is unifiable and nullary:
\begin{itemize}
\item $\varphi::=((x\rightarrow\boxplus x)\wedge(\neg x\rightarrow\boxminus\neg x))$.
\end{itemize}
\begin{lemma}\label{lemma:every:unifier:of:varphi:has:this:property:1}
Let $\sigma$ be a unifier of $\varphi$.
For all $k\in\N$,
\begin{enumerate}
\item $(\sigma(x)\rightarrow\boxplus^{<k}\sigma(x))\in\KB$,
\item $(\neg\sigma(x)\rightarrow\boxminus^{<k}\neg\sigma(x))\in\KB$.
\end{enumerate}
\end{lemma}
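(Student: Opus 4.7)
The plan is to do a straightforward induction on $k$ for each of the two claims. The base case $k=0$ is immediate because $\boxplus^{<0}\sigma(x) = \top$ and $\boxminus^{<0}\neg\sigma(x)=\top$, so the implications are tautologies of $\KB$.

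For the inductive step of (1), I would first extract from the hypothesis that $\sigma$ unifies $\varphi$ the two half-statements $(\sigma(x)\rightarrow\boxplus\sigma(x))\in\KB$ and $(\neg\sigma(x)\rightarrow\boxminus\neg\sigma(x))\in\KB$, since $\sigma(\varphi)$ is the conjunction of these. Then, assuming the IH $(\sigma(x)\rightarrow\boxplus^{<k}\sigma(x))\in\KB$, I would invoke the monotonicity of the derived modality $\boxplus$ to obtain $(\boxplus\sigma(x)\rightarrow\boxplus\boxplus^{<k}\sigma(x))\in\KB$. Chaining with the half-statement $(\sigma(x)\rightarrow\boxplus\sigma(x))\in\KB$ yields $(\sigma(x)\rightarrow\boxplus\boxplus^{<k}\sigma(x))\in\KB$. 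Combining this with the trivial $(\sigma(x)\rightarrow\sigma(x))$ and appealing to Lemma~\ref{lemma:about:box:less:than}, which gives $(\boxplus^{<k+1}\sigma(x)\leftrightarrow\sigma(x)\wedge\boxplus\boxplus^{<k}\sigma(x))\in\KB$, we conclude $(\sigma(x)\rightarrow\boxplus^{<k+1}\sigma(x))\in\KB$.

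The argument for (2) is completely symmetric, with $\boxminus$, $\neg\sigma(x)$, and the second clause of Lemma~\ref{lemma:about:box:less:than} in place of their counterparts.

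The only step that is not a routine calculation is the monotonicity of $\boxplus$ (and of $\boxminus$): from $(\psi\rightarrow\chi)\in\KB$ one should conclude $(\boxplus\psi\rightarrow\boxplus\chi)\in\KB$. This is however immediate because in the defining expression of $\boxplus\psi$, the formula $\psi$ appears only in the positive position at the innermost implication; applying classical propositional reasoning at that position, necessitation, and the $\K$-axiom at each of the three $\Box$ layers yields the desired implication. This is the only point that requires any care, and it is so mild that I would only mention it in passing rather than carry out the three-layer unfolding in detail.
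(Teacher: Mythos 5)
Your proof is correct and takes the same route the paper intends: the paper's own proof is just the phrase ``By induction on $k$'', and your induction, using the two conjuncts of $\sigma(\varphi)\in\KB$, the monotonicity of the derived modalities $\boxplus$ and $\boxminus$, and Lemma~\ref{lemma:about:box:less:than} for the step, is exactly the expected filling-in of that outline. The monotonicity of $\boxplus$ and $\boxminus$ that you flag is indeed the only point needing a remark, and it holds for the reason you give (or, even more directly, because $\KB$ is defined semantically and validity of $\psi\rightarrow\chi$ transfers through the positive occurrence of $\psi$ in $\boxplus\psi$).
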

\begin{proof}
By induction on $k$.
\end{proof}
\begin{lemma}\label{lemma:0:K}
For all $k\in\N$,
\begin{enumerate}
\item $\sigma_{k}$ is a unifier of $\varphi$,
\item $\tau_{k}$ is a unifier of $\varphi$.
\end{enumerate}
\end{lemma}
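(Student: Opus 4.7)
The statement factors over the conjunction: applying any substitution $\sigma$ to $\varphi$ yields
$\sigma(\varphi)=(\sigma(x)\rightarrow\boxplus\sigma(x))\wedge(\neg\sigma(x)\rightarrow\boxminus\neg\sigma(x))$,
so proving $\sigma\in\{\sigma_{k},\tau_{k}\}$ is a unifier amounts to checking that both conjuncts lie in $\KB$. The plan is to get one conjunct directly from Lemma~\ref{lemma:sigma:tau:imply:box:x} and the other for free from the tense-duality already established in Lemma~\ref{proposition:tense:modalities}.

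For part~$(1)$, I would start from Lemma~\ref{lemma:sigma:tau:imply:box:x}$(1)$, which asserts $(\sigma_{k}(x)\rightarrow\boxplus\sigma_{k}(x))\in\KB$; this gives the first conjunct of $\sigma_{k}(\varphi)$. Then I would invoke Lemma~\ref{proposition:tense:modalities} with the formula $\sigma_{k}(x)$ playing the role of the parameter ``$\varphi$'' in that lemma: the biconditional forces $(\neg\sigma_{k}(x)\rightarrow\boxminus\neg\sigma_{k}(x))\in\KB$, which is the second conjunct. Conjoining, $\sigma_{k}(\varphi)\in\KB$, so $\sigma_{k}$ is a unifier of $\varphi$.

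For part~$(2)$, the argument is perfectly symmetric. Here Lemma~\ref{lemma:sigma:tau:imply:box:x}$(2)$ supplies $(\neg\tau_{k}(x)\rightarrow\boxminus\neg\tau_{k}(x))\in\KB$ directly, which is the second conjunct of $\tau_{k}(\varphi)$. Applying Lemma~\ref{proposition:tense:modalities} with $\tau_{k}(x)$ in place of ``$\varphi$'' (using the reverse direction of the iff) then yields $(\tau_{k}(x)\rightarrow\boxplus\tau_{k}(x))\in\KB$, the first conjunct. Hence $\tau_{k}(\varphi)\in\KB$ and $\tau_{k}$ unifies $\varphi$.

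There is no real obstacle: the heavy lifting was done when Lemma~\ref{proposition:tense:modalities} was established, which is exactly the bridge between the $\boxplus$-fixpoint condition and the $\boxminus$-fixpoint condition over $\KB$. The only thing one must be careful about is that Lemma~\ref{proposition:tense:modalities} is stated for arbitrary formulas, so instantiating it at $\sigma_{k}(x)$ and at $\tau_{k}(x)$ is legitimate; no induction on $k$ is needed here since Lemma~\ref{lemma:sigma:tau:imply:box:x} already packages the inductive content.
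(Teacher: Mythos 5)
Your proof is correct and is exactly the paper's argument: the paper's proof consists precisely of citing Lemma~\ref{lemma:sigma:tau:imply:box:x} for one conjunct and Lemma~\ref{proposition:tense:modalities} for the other, which you have merely spelled out in detail.
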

\begin{proof}
By Lemmas~\ref{proposition:tense:modalities} and~\ref{lemma:sigma:tau:imply:box:x}.
\end{proof}
\begin{lemma}\label{lemma:4:K}
Let $\upsilon$ be a substitution.
If $\upsilon$ is a unifier of $\varphi$ then
\begin{enumerate}
\item for all $k\in\N$, the following conditions are equivalent: $(a)$~$\sigma_{k}\circ\upsilon\simeq\upsilon$, $(b)$~$\sigma_{k}\preceq\upsilon$, $(c)$~$(\upsilon(x)\rightarrow\boxplus^{k}\bot)\in\KB$,
\item for all $k\in\N$, the following conditions are equivalent: $(d)$~$\tau_{k}\circ\upsilon\simeq\upsilon$, $(e)$~$\tau_{k}\preceq\upsilon$, $(f)$~$(\neg\upsilon(x)\rightarrow\boxminus^{k}\bot)\in\KB$.
\end{enumerate}
\end{lemma}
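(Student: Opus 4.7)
The plan is to establish part (1) by proving the cycle of implications $(a) \Rightarrow (b) \Rightarrow (c) \Rightarrow (a)$, and then part (2) is obtained by a completely symmetric argument, replacing each lemma by its ``minus'' counterpart.

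For $(a) \Rightarrow (b)$, this is immediate from the definition of $\preceq$: if $\sigma_{k}\circ\upsilon\simeq\upsilon$, then $\upsilon$ itself witnesses $\sigma_{k}\preceq\upsilon$. For $(b) \Rightarrow (c)$, suppose $\sigma_{k}\preceq\upsilon$ and let $\lambda$ be such that $\sigma_{k}\circ\lambda\simeq\upsilon$, so $(\lambda(\sigma_{k}(x))\leftrightarrow\upsilon(x))\in\KB$. By Lemma~\ref{lemma:sigma:tau:imply:box:bot:bot} applied with $l=k$, $(\sigma_{k}(x)\rightarrow\boxplus^{k}\bot)\in\KB$, and since $\KB$ is closed under substitution, $(\lambda(\sigma_{k}(x))\rightarrow\boxplus^{k}\bot)\in\KB$. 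Combining, $(\upsilon(x)\rightarrow\boxplus^{k}\bot)\in\KB$, which is $(c)$.

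For the key implication $(c) \Rightarrow (a)$, I would argue as follows. Since $\sigma_{k}$ fixes every variable distinct from $x$, verifying $\sigma_{k}\circ\upsilon\simeq\upsilon$ reduces to showing $(\upsilon(\sigma_{k}(x))\leftrightarrow\upsilon(x))\in\KB$. One direction, $(\upsilon(\sigma_{k}(x))\rightarrow\upsilon(x))\in\KB$, follows by applying $\upsilon$ to the formula provided by Lemma~\ref{lemma:sigma:tau:imply:x}(1). For the converse direction, I use that $\upsilon$ is a unifier of $\varphi$, so $\upsilon(\varphi)\in\KB$, hence $(\upsilon(x)\rightarrow\boxplus\upsilon(x))\in\KB$; by Lemma~\ref{lemma:every:unifier:of:varphi:has:this:property:1}(1) this gives $(\upsilon(x)\rightarrow\boxplus^{<k}\upsilon(x))\in\KB$. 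Combined with the hypothesis $(c)$, I obtain $(\upsilon(x)\rightarrow\boxplus^{<k}\upsilon(x)\wedge\boxplus^{k}\bot)\in\KB$. Now applying $\upsilon$ to Lemma~\ref{lemma:to:be:used:later}(1) gives $(\boxplus^{<k}\upsilon(x)\wedge\boxplus^{k}\bot\rightarrow\upsilon(\sigma_{k}(x)))\in\KB$, and chaining yields $(\upsilon(x)\rightarrow\upsilon(\sigma_{k}(x)))\in\KB$, as required.

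Part (2) is entirely analogous: $(d) \Rightarrow (e)$ is immediate; $(e) \Rightarrow (f)$ uses Lemma~\ref{lemma:sigma:tau:imply:box:bot:bot}(2) in place of (1); and $(f) \Rightarrow (d)$ uses the fact that $\upsilon$ being a unifier gives $(\neg\upsilon(x)\rightarrow\boxminus\neg\upsilon(x))\in\KB$, together with Lemma~\ref{lemma:every:unifier:of:varphi:has:this:property:1}(2) and Lemma~\ref{lemma:to:be:used:later}(2), to derive $(\neg\upsilon(x)\rightarrow\upsilon(\neg\tau_{k}(x)))\in\KB$, which combined with Lemma~\ref{lemma:sigma:tau:imply:x}(2) delivers $(\upsilon(\tau_{k}(x))\leftrightarrow\upsilon(x))\in\KB$. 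The only mildly delicate step in the whole argument is $(c) \Rightarrow (a)$, where one must remember that the unification hypothesis on $\upsilon$ is exactly what upgrades the single-step propagation $(\upsilon(x)\rightarrow\boxplus\upsilon(x))\in\KB$ to the bounded-iterated version $(\upsilon(x)\rightarrow\boxplus^{<k}\upsilon(x))\in\KB$ needed to apply Lemma~\ref{lemma:to:be:used:later}; every other step is a direct invocation of an earlier lemma.
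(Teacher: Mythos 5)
Your proposal is correct and follows essentially the same route as the paper: the cycle $(a)\Rightarrow(b)\Rightarrow(c)\Rightarrow(a)$ with Lemma~\ref{lemma:sigma:tau:imply:box:bot:bot} for $(b)\Rightarrow(c)$ and the combination of Lemmas~\ref{lemma:every:unifier:of:varphi:has:this:property:1}, \ref{lemma:to:be:used:later} and~\ref{lemma:sigma:tau:imply:x} for $(c)\Rightarrow(a)$, with the symmetric argument for part~(2). The only difference is that you make explicit the (correct) observation that checking $\sigma_{k}\circ\upsilon\simeq\upsilon$ reduces to the variable $x$, which the paper leaves implicit.
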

\begin{proof}
Suppose $\upsilon$ is a unifier of $\varphi$.
\\
$(1)$: Let $k\in\N$.
\\
$(a)\Rightarrow(b)$: Suppose $\sigma_{k}\circ\upsilon\simeq\upsilon$.
Hence, $\sigma_{k}\preceq\upsilon$.
\\
$(b)\Rightarrow(c)$: Suppose $\sigma_{k}\preceq\upsilon$.
Let $\upsilon^{\prime}$ be a substitution such that $\sigma_{k}\circ\upsilon^{\prime}\simeq\upsilon$.
Hence, $(\upsilon^{\prime}(\sigma_{k}(x))\leftrightarrow\upsilon(x))\in\KB$.
By Lemma~\ref{lemma:sigma:tau:imply:box:bot:bot}, $(\sigma_{k}(x)\rightarrow\boxplus^{k}\bot)\in\KB$.
Thus, $(\upsilon^{\prime}(\sigma_{k}(x))\rightarrow\boxplus^{k}\bot)\in\KB$.
Since $(\upsilon^{\prime}(\sigma_{k}(x))\leftrightarrow\upsilon(x))\in\KB$, therefore $(\upsilon(x)\rightarrow\boxplus^{k}\bot)\in\KB$.
\\
$(c)\Rightarrow(a)$: Suppose $(\upsilon(x)\rightarrow\boxplus^{k}\bot)\in\KB$.
Since $\upsilon$ is a unifier of $\varphi$, therefore by Lemma~\ref{lemma:every:unifier:of:varphi:has:this:property:1}, $(\upsilon(x)\rightarrow\boxplus^{<k}\upsilon(x))\in\KB$.
Since $(\upsilon(x)\rightarrow\boxplus^{k}\bot)\in\KB$, therefore $(\upsilon(x)\rightarrow\boxplus^{<k}\upsilon(x)\wedge\boxplus^{k}\bot)\in\KB$.
By Lemma~\ref{lemma:to:be:used:later}, $(\boxplus^{<k}x\wedge\boxplus^{k}\bot\rightarrow\sigma_{k}(x))\in\KB$.
Hence, $(\boxplus^{<k}\upsilon(x)\wedge\boxplus^{k}\bot\rightarrow\upsilon(\sigma_{k}(x)))\in\KB$.
Since $(\upsilon(x)\rightarrow\boxplus^{<k}\upsilon(x)\wedge\boxplus^{k}\bot)\in\KB$, therefore $(\upsilon(x)\rightarrow\upsilon(\sigma_{k}(x)))\in\KB$.
By Lemma~\ref{lemma:sigma:tau:imply:x}, $(\sigma_{k}(x)\rightarrow x)\in\KB$.
Thus, $(\upsilon(\sigma_{k}(x))\rightarrow\upsilon(x))\in\KB$.
Since $(\upsilon(x)\rightarrow\upsilon(\sigma_{k}(x)))\in\KB$, therefore $(\upsilon(\sigma_{k}(x))\leftrightarrow\upsilon(x))\in\KB$.
Consequently, $\sigma_{k}\circ\upsilon\simeq\upsilon$.
\\
$(2)$: Let $k\in\N$.
\\
$(d)\Rightarrow(e)$: Suppose $\tau_{k}\circ\upsilon\simeq\upsilon$.
Hence, $\tau_{k}\preceq\upsilon$.
\\
$(e)\Rightarrow(f)$: Suppose $\tau_{k}\preceq\upsilon$.
Let $\upsilon^{\prime}$ be a substitution such that $\tau_{k}\circ\upsilon^{\prime}\simeq\upsilon$.
Hence, $(\upsilon^{\prime}(\tau_{k}(x))\leftrightarrow\upsilon(x))\in\KB$.
By Lemma~\ref{lemma:sigma:tau:imply:box:bot:bot}, $(\neg\tau_{k}(x)\rightarrow\boxminus^{k}\bot)\in\KB$.
Thus, $(\upsilon^{\prime}(\neg\tau_{k}(x))\rightarrow\boxminus^{k}\bot)\in\KB$.
Since $(\upsilon^{\prime}(\tau_{k}(x))\leftrightarrow\upsilon(x))\in\KB$, therefore $(\neg\upsilon(x)\rightarrow\boxminus^{k}\bot)\in\KB$.
\\
$(f)\Rightarrow(d)$: Suppose $(\neg\upsilon(x)\rightarrow\boxminus^{k}\bot)\in\KB$.
Since $\upsilon$ is a unifier of $\varphi$, therefore by Lemma~\ref{lemma:every:unifier:of:varphi:has:this:property:1}, $(\neg\upsilon(x)\rightarrow\boxminus^{<k}\neg\upsilon(x))\in\KB$.
Since $(\neg\upsilon(x)\rightarrow\boxminus^{k}\bot)\in\KB$, therefore $(\neg\upsilon(x)\rightarrow\boxminus^{<k}\neg\upsilon(x)\wedge\boxminus^{k}\bot)\in\KB$.
By Lemma~\ref{lemma:to:be:used:later}, $(\boxminus^{<k}\neg x\wedge\boxminus^{k}\bot\rightarrow\neg\tau_{k}(x))\in\KB$.
Hence, $(\boxminus^{<k}\neg\upsilon(x)\wedge\boxminus^{k}\bot\rightarrow\upsilon(\neg\tau{k}(x)))\in\KB$.
Since $(\neg\upsilon(x)\rightarrow\boxminus^{<k}\neg\upsilon(x)\wedge\boxminus^{k}\bot)\in\KB$, therefore $(\neg\upsilon(x)\rightarrow\upsilon(\neg\tau_{k}(x)))\in\KB$.
By Lemma~\ref{lemma:sigma:tau:imply:x}, $(\neg\tau_{k}(x)\rightarrow\neg x)\in\KB$.
Thus, $(\upsilon(\neg\tau_{k}(x))\rightarrow\neg\upsilon(x))\in\KB$.
Since $(\neg\upsilon(x)\rightarrow\upsilon(\neg\tau_{k}(x)))\in\KB$, therefore $(\upsilon(\tau_{k}(x))\leftrightarrow\upsilon(x))\in\KB$.
Consequently, $\tau_{k}\circ\upsilon\simeq\upsilon$.
\end{proof}
\begin{lemma}\label{lemma:6:K}
Let $\sigma$ be a substitution.
If $\sigma$ is a unifier of $\varphi$ then there exists $k\in\N$ such that $\sigma_{k}\preceq\sigma$, or $\tau_{k}\preceq\sigma$.
\end{lemma}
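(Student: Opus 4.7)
The plan is to apply Lemma~\ref{lemma:4:K} and reduce the goal to: there exists $k\in\N$ with $(\sigma(x)\rightarrow\boxplus^{k}\bot)\in\KB$ or $(\neg\sigma(x)\rightarrow\boxminus^{k}\bot)\in\KB$. I argue by contradiction: suppose neither holds for any $k$, and set $k=d+1$ with $d=\deg(\sigma(x))$. This yields a symmetric countermodel $M_{1},s_{1}\models\sigma(x)\wedge\neg\boxplus^{d+1}\bot$ together with a witnessing chain $s_{1}=v_{0}R_{1}v_{1}R_{1}\cdots R_{1}v_{3(d+1)}$ whose intermediate states carry the labels $(p^{0}q^{0})(p^{1}q^{0})(p^{0}q^{1})(p^{0}q^{0})\cdots(p^{0}q^{0})$, and symmetrically a countermodel $M_{2},s_{2}\models\neg\sigma(x)\wedge\neg\boxminus^{d+1}\bot$ with a dual chain.

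The idea is to glue these countermodels. First pass to the standard (symmetric) unravellings $M_{1}^{\star},M_{2}^{\star}$, whose underlying frames are trees when viewed as undirected graphs. The two witnessing chains lift to simple branches $w_{0},\ldots,w_{3(d+1)}$ and $w'_{0},\ldots,w'_{3(d+1)}$ of \emph{distinct} states, preserving all labels. Because each unravelling is bisimilar to the original, $M_{1}^{\star},w_{0}\models\sigma(x)$ and $M_{2}^{\star},w'_{0}\models\neg\sigma(x)$. Form the symmetric model $M$ by taking the disjoint union of $M_{1}^{\star}$ and $M_{2}^{\star}$ and identifying $w_{3(d+1)}$ with $w'_{3(d+1)}$; both carry the label $p^{0}\wedge q^{0}$, so the identification is consistent on the parameters $p,q$ (conflicts on other atoms can be resolved arbitrarily).

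Because the gluing point sits at tree-distance $3(d+1)>d$ from each root, the ball of radius $d$ around $w_{0}$ in $M$ is identical (as a pointed submodel) to that in $M_{1}^{\star}$, and similarly for $w'_{0}$; the standard modal-locality argument for formulas of degree $d$ then yields $M,w_{0}\models\sigma(x)$ and $M,w'_{0}\models\neg\sigma(x)$. To close the contradiction, I invoke Lemma~\ref{lemma:every:unifier:of:varphi:has:this:property:1} to get $(\sigma(x)\rightarrow\boxplus^{d+1}\sigma(x))\in\KB$ and $(\neg\sigma(x)\rightarrow\boxminus^{d+1}\neg\sigma(x))\in\KB$; propagating $\sigma(x)$ along the $(p,q)$-labelled chain $w_{0},\ldots,w_{3(d+1)}$ in $M$ gives $M,w_{3(d+1)}\models\sigma(x)$, and symmetrically $M,w'_{3(d+1)}\models\neg\sigma(x)$, contradicting $w_{3(d+1)}=w'_{3(d+1)}$.

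The main obstacle will be executing the unravelling-and-gluing cleanly enough to pin down the distances: one must verify that the lifted chain occupies a single branch of the unravelled tree (so its endpoint sits at tree-distance exactly $3(d+1)$ from the root), and then deploy the modal-locality principle precisely at depth $d$ so that the truth of $\sigma(x)$ at the two roots is preserved under the identification of the two gluing endpoints.
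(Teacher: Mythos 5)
Your argument is correct and follows the paper's proof almost step for step: reduce via Lemma~\ref{lemma:4:K} to the non-provability of $(\sigma(x)\rightarrow\boxplus^{k}\bot)$ and $(\neg\sigma(x)\rightarrow\boxminus^{k}\bot)$, extract the two labelled chains from the countermodels, pass to symmetric unravellings, glue, preserve the truth of $\sigma(x)$ and $\neg\sigma(x)$ at the roots by a degree/locality argument, and propagate along the chains using $(\sigma(x)\rightarrow\boxplus\sigma(x))$ and $(\neg\sigma(x)\rightarrow\boxminus\neg\sigma(x))$. The one place you diverge is the gluing: you identify the two chain endpoints (both labelled $p^{0}\wedge q^{0}$) and obtain the contradiction at the merged state, whereas the paper keeps the endpoints distinct and inserts two fresh bridge states $t,u$ labelled $p^{1}\wedge q^{0}$ and $p^{0}\wedge q^{1}$, so that one further $\boxplus$-step (resp.\ $\boxminus$-step) carries $\sigma(x)$ across the bridge onto the state already satisfying $\neg\sigma(x)$; both constructions are sound, yours saves that last step, and since identification of two points preserves seriality and reflexivity your variant also survives the adaptation to $\KDB$ and $\KTB$ that the paper performs via the loops on $t$ and $u$.
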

\begin{proof}
Suppose $\sigma$ is a unifier of $\varphi$.
By Propositions~\ref{lemma:simeq:ref:sym:tra} and~\ref{normal:unifiers:are:enough}, we can assume that for all variables $y$ distinct from $x$, $\sigma(y)=y$.
Let $k\in\N$ be such that $\deg(\sigma(x))\leq3k$.
Suppose $\sigma_{k}\not\preceq\sigma$ and $\tau_{k}\not\preceq\sigma$.
Since $\sigma$ is a unifier of $\varphi$, therefore by Lemma~\ref{lemma:4:K}, $(\sigma(x)\rightarrow\boxplus^{k}\bot)\not\in\KB$ and $(\neg\sigma(x)\rightarrow\boxminus^{k}\bot)\not\in\KB$.
Let $F=(W,R)$ be a frame, $M=(W,R,V)$ be a model based on $F$, $s\in W$, $F^{\prime}=(W^{\prime},R^{\prime})$ be a frame, $M^{\prime}=(W^{\prime},R^{\prime},V^{\prime})$ be a model based on $F^{\prime}$ and $s^{\prime}\in W^{\prime}$ be such that $M,s\not\models(\sigma(x)\rightarrow\boxplus^{k}\bot)$ and $M^{\prime},s^{\prime}\not\models(\neg\sigma(x)\rightarrow\boxminus^{k}\bot)$.
Hence, $M,s\models\sigma(x)$, $M,s\not\models\boxplus^{k}\bot$, $M^{\prime},s^{\prime}\models\neg\sigma(x)$ and $M^{\prime},s^{\prime}\not\models\boxminus^{k}\bot$.
Let $v_{0},t_{1},u_{1},v_{1},\ldots,t_{k},u_{k},v_{k}\in W$ and $v^{\prime}_{0},t^{\prime}_{1},u^{\prime}_{1},v^{\prime}_{1},\ldots,t^{\prime}_{k},u^{\prime}_{k},v^{\prime}_{k}\in W^{\prime}$ be such that $s=v_{0}$, $s^{\prime}=v^{\prime}_{0}$ and for all $i\in\N$, if $i<k$ then
\begin{itemize}
\item $v_{i}Rt_{i+1}$,
\item $t_{i+1}Ru_{i+1}$,
\item $u_{i+1}Rv_{i+1}$,
\item $v^{\prime}_{i}R^{\prime}t^{\prime}_{i+1}$,
\item $t^{\prime}_{i+1}R^{\prime}u^{\prime}_{i+1}$,
\item $u^{\prime}_{i+1}R^{\prime}v^{\prime}_{i+1}$,
\item $M,v_{i}\models p^{0}\wedge q^{0}$,
\item $M,t_{i+1}\models p^{1}\wedge q^{0}$,
\item $M,u_{i+1}\models p^{0}\wedge q^{1}$,
\item $M,v_{i+1}\models p^{0}\wedge q^{0}$,
\item $M^{\prime},v^{\prime}_{i}\models p^{0}\wedge q^{0}$,
\item $M^{\prime},t^{\prime}_{i+1}\models p^{0}\wedge q^{1}$,
\item $M^{\prime},u^{\prime}_{i+1}\models p^{1}\wedge q^{0}$,
\item $M^{\prime},v^{\prime}_{i+1}\models p^{0}\wedge q^{0}$.
\end{itemize}
Let $M_{s}=(W_{s},R_{s},V_{s})$ be the symmetric unravelling of $M$ around $s$ and $M^{\prime}_{s^{\prime}}=(W^{\prime}_{s^{\prime}},R^{\prime}_{s^{\prime}},V^{\prime}_{s^{\prime}})$ be the symmetric unravelling of $M^{\prime}$ around $s^{\prime}$.
For more on this, see~\cite[Definition~$4.51$]{Blackburn:deRijke:Venema:2001}.
Since $M,s\models\sigma(x)$ and $M^{\prime},s^{\prime}\models\neg\sigma(x)$, therefore by~\cite[Proposition~$2.14$ and Lemma~$4.52$]{Blackburn:deRijke:Venema:2001}, $M_{s},(v_{0})\models\sigma(x)$ and $M^{\prime}_{s^{\prime}},(v^{\prime}_{0})\models\neg\sigma(x)$.
Let $F^{\prime\prime}=(W^{\prime\prime},R^{\prime\prime})$ be the least frame containing the disjoint union of $(W_{s},R_{s})$ and $(W^{\prime}_{s^{\prime}},R^{\prime}_{s^{\prime}})$ and such that for some new states $t$ and $u$,
\begin{itemize}
\item $(v_{0},t_{1},u_{1},v_{1},\ldots,t_{k},u_{k},v_{k})R^{\prime\prime}t$,
\item $tR^{\prime\prime}(v_{0},t_{1},u_{1},v_{1},\ldots,t_{k},u_{k},v_{k})$,
\item $tR^{\prime\prime}t$,
\item $tR^{\prime\prime}u$,
\item $uR^{\prime\prime}t$,
\item $uR^{\prime\prime}u$,
\item $uR^{\prime\prime}(v^{\prime}_{0},t^{\prime}_{1},u^{\prime}_{1},v^{\prime}_{1},\ldots,t^{\prime}_{k},u^{\prime}_{k},v^{\prime}_{k})$,
\item $(v^{\prime}_{0},t^{\prime}_{1},u^{\prime}_{1},v^{\prime}_{1},\ldots,t^{\prime}_{k},u^{\prime}_{k},v^{\prime}_{k})R^{\prime\prime}u$.
\end{itemize}
Let $M^{\prime\prime}=(W^{\prime\prime},R^{\prime\prime},V^{\prime\prime})$ where
\begin{itemize}
\item $V^{\prime\prime}(p)=V_{s}(p)\cup V^{\prime}_{s^{\prime}}(p)\cup\{t\}$,
\item $V^{\prime\prime}(q)=V_{s}(q)\cup V^{\prime}_{s^{\prime}}(q)\cup\{u\}$,
\item for all atoms $\alpha$, if $\alpha\not=p$ and $\alpha\not=q$ then $V^{\prime\prime}(\alpha)=V_{s}(\alpha)\cup V^{\prime}_{s^{\prime}}(\alpha)$.
\end{itemize}
Since $\deg(\sigma(x))\leq3k$, $M_{s},(v_{0})\models\sigma(x)$ and $M^{\prime}_{s^{\prime}},(v^{\prime}_{0})\models\neg\sigma(x)$, therefore $M^{\prime\prime},(v_{0})
$\linebreak$
\models\sigma(x)$ and $M^{\prime\prime},(v^{\prime}_{0})\models\neg\sigma(x)$.
Since $\sigma$ is a unifier of $\varphi$, therefore $((\sigma(x)\rightarrow\boxplus\sigma(x))\wedge(\neg\sigma(x)\rightarrow\boxminus\neg\sigma(x)))\in\KB$.
Since $M^{\prime\prime},(v_{0})\models\sigma(x)$ and $M^{\prime\prime},(v^{\prime}_{0})\models\neg\sigma(x)$, considering that for all $i\in\N$, if $i<k$ then $M,v_{i}\models p^{0}\wedge q^{0}$, $M,t_{i+1}\models p^{1}\wedge q^{0}$, $M,u_{i+1}\models p^{0}\wedge q^{1}$, $M,v_{i+1}\models p^{0}\wedge q^{0}$, $M^{\prime},v^{\prime}_{i}\models p^{0}\wedge q^{0}$, $M^{\prime},t^{\prime}_{i+1}\models p^{0}\wedge q^{1}$, $M^{\prime},u^{\prime}_{i+1}\models p^{1}\wedge q^{0}$ and $M^{\prime},v^{\prime}_{i+1}\models p^{0}\wedge q^{0}$, therefore $M^{\prime\prime},(v_{0},t_{1},u_{1},v_{1},\ldots,t_{k},u_{k},
$\linebreak$
v_{k})\models\sigma(x)$ and $M^{\prime\prime},(v^{\prime}_{0},t^{\prime}_{1},u^{\prime}_{1},v^{\prime}_{1},\ldots,t^{\prime}_{k},u^{\prime}_{k},v^{\prime}_{k})\models\neg\sigma(x)$.
Since $((\sigma(x)\rightarrow\boxplus\sigma(x))\wedge(\neg\sigma(x)\rightarrow\boxminus\neg\sigma(x)))\in\KB$, considering that $M,v_{k}\models p^{0}\wedge q^{0}$, $M^{\prime\prime},t\models p^{1}\wedge q^{0}$, $M^{\prime\prime},u\models p^{0}\wedge q^{1}$ and $M^{\prime},v^{\prime}_{k}\models p^{0}\wedge q^{0}$, therefore $M^{\prime\prime},(v_{0},t_{1},u_{1},v_{1},\ldots,t_{k},
$\linebreak$
u_{k},v_{k})\models\neg\sigma(x)$ and $M^{\prime\prime},(v^{\prime}_{0},t^{\prime}_{1},u^{\prime}_{1},v^{\prime}_{1},\ldots,t^{\prime}_{k},u^{\prime}_{k},v^{\prime}_{k})\models\sigma(x)$.
Thus, $M^{\prime\prime},(v_{0},t_{1},
$\linebreak$
u_{1},v_{1},\ldots,t_{k},u_{k},v_{k})\not\models\sigma(x)$ and $M^{\prime\prime},(v^{\prime}_{0},t^{\prime}_{1},u^{\prime}_{1},v^{\prime}_{1},\ldots,t^{\prime}_{k},u^{\prime}_{k},v^{\prime}_{k})\not\models\neg\sigma(x)$: a contradiction.
\end{proof}
In the proof of Lemma~\ref{lemma:6:K}, remark that the symmetric unravellings $M_{s}$ and $M^{\prime}_{s^{\prime}}$ are serial when the models $M$ and $M^{\prime}$ are serial.
Moreover, when the models $M$ and $M^{\prime}$ are reflexive, $M_{s}$ and $M^{\prime}_{s^{\prime}}$ can be defined as their reflexive symmetric unravellings.
\begin{proposition}\label{lemma:7:K}
$\varphi$ is nullary.
\end{proposition}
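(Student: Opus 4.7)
The plan is to argue by contradiction: suppose a minimal complete set $\Sigma$ of unifiers of $\varphi$ exists. Lemma~\ref{lemma:0:K} supplies unifiers of $\varphi$ (for instance $\sigma_{0}$), so completeness of $\Sigma$ forces $\Sigma$ to be nonempty; I would fix some $\mu\in\Sigma$ and invoke Lemma~\ref{lemma:6:K} to obtain $k\in\N$ with either $\sigma_{k}\preceq\mu$ or $\tau_{k}\preceq\mu$. The argument then splits on which alternative holds.

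In the first case, the strategy is to play $\sigma_{k}\preceq\mu$ off against $\sigma_{k+1}$. Since $\sigma_{k+1}$ is a unifier (Lemma~\ref{lemma:0:K}) and $\Sigma$ is complete, some $\mu'\in\Sigma$ satisfies $\mu'\preceq\sigma_{k+1}$. Using $\sigma_{k+1}\preceq\sigma_{k}$ from Lemma~\ref{lemma:0:K:q} together with the assumed $\sigma_{k}\preceq\mu$, transitivity of $\preceq$ (Proposition~\ref{lemma:simeq:ref:sym:tra}) yields $\mu'\preceq\mu$, at which point minimality of $\Sigma$ forces $\mu'\simeq\mu$, and in particular $\mu\preceq\sigma_{k+1}$. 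Composing once more with $\sigma_{k}\preceq\mu$ produces $\sigma_{k}\preceq\sigma_{k+1}$, which flatly contradicts Lemma~\ref{lemma:0:K:r}. The second case is completely symmetric, with $\tau_{k+1}$ replacing $\sigma_{k+1}$ and the second clauses of Lemmas~\ref{lemma:0:K:q} and~\ref{lemma:0:K:r} invoked in place of the first; notice that no ``mixed'' chain between the $\sigma$- and $\tau$-sides arises, so Lemma~\ref{lemma:0:K:qr} is not called upon here.

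I expect no substantive obstacle: Lemma~\ref{lemma:6:K} already does essentially all of the hard work, and the present proof only requires one extra application of completeness (to reach $\sigma_{k+1}$ or $\tau_{k+1}$), a single application of minimality, and two invocations of transitivity. The only piece of bookkeeping worth flagging is the routine fact that $\simeq$ is a congruence for $\preceq$, in the sense that $\mu'\simeq\mu$ together with $\mu'\preceq\sigma$ entails $\mu\preceq\sigma$; this follows immediately from closure of $\KB$ under uniform substitution and is implicit in the framework established around Proposition~\ref{lemma:simeq:ref:sym:tra}.
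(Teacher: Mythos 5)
Your argument is correct and follows the same overall strategy as the paper: assume a minimal complete set $\Sigma$ exists, extract an element of $\Sigma$, apply Lemma~\ref{lemma:6:K} to get $\sigma_{k}\preceq\mu$ or $\tau_{k}\preceq\mu$, and in the $\sigma$-case use completeness at $\sigma_{k+1}$, minimality, and transitivity to force $\sigma_{k}\preceq\sigma_{k+1}$ against Lemma~\ref{lemma:0:K:r}. The one genuine divergence is the second case. The paper does not take an arbitrary element of $\Sigma$: it takes $\sigma\in\Sigma$ with $\sigma\preceq\sigma_{0}$ (the witness provided by completeness applied to $\sigma_{0}$), so that in the case $\tau_{k}\preceq\sigma$ it can conclude $\tau_{k}\preceq\sigma_{0}$ immediately, contradicting Lemma~\ref{lemma:0:K:qr}. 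You instead run the $\tau$-side symmetrically through $\tau_{k+1}$, using the second clauses of Lemmas~\ref{lemma:0:K:q} and~\ref{lemma:0:K:r}, and as you observe this makes Lemma~\ref{lemma:0:K:qr} superfluous for this proposition. Both routes are sound given the lemmas already established; yours is slightly more uniform and needs one less lemma, while the paper's dispatches the second case in a single line at the cost of fixing the initial witness against $\sigma_{0}$ and invoking the incomparability of the $\sigma$- and $\tau$-families. Your closing remark that $\mu'\simeq\mu$ and $\mu'\preceq\sigma$ entail $\mu\preceq\sigma$ needs no appeal to uniform substitution: it is immediate from the paper's observation that $\preceq$ contains $\simeq$, the symmetry of $\simeq$, and the transitivity of $\preceq$ from Proposition~\ref{lemma:simeq:ref:sym:tra}.
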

\begin{proof}
Suppose $\varphi$ is not nullary.
Let $\Sigma$ be a minimal complete set of unifiers of $\varphi$.
By Lemma~\ref{lemma:0:K}, $\sigma_{0}$ is a unifier of $\varphi$.
Since $\Sigma$ is a complete set of unifiers of $\varphi$, therefore let $\sigma\in\Sigma$ be such that $\sigma\preceq\sigma_{0}$.
Hence, by Lemma~\ref{lemma:6:K}, let $k\in\N$ be such that $\sigma_{k}\preceq\sigma$, or $\tau_{k}\preceq\sigma$.
\\
{\bf ---~Case ``$\sigma_{k}\preceq\sigma$'':}
By Lemma~\ref{lemma:0:K}, $\sigma_{k+1}$ is a unifier of $\varphi$.
Since $\Sigma$ is a complete set of unifiers of $\varphi$, therefore let $\sigma^{\prime}\in\Sigma$ be such that $\sigma^{\prime}\preceq\sigma_{k+1}$.
Since $\sigma_{k}\preceq\sigma$, therefore by Lemma~\ref{lemma:0:K:q}, $\sigma^{\prime}\preceq\sigma$.
Since $\Sigma$ is a minimal set of unifiers of $\varphi$, therefore $\sigma^{\prime}\simeq\sigma$.
Since $\sigma_{k}\preceq\sigma$ and $\sigma^{\prime}\preceq\sigma_{k+1}$, therefore $\sigma_{k}\preceq\sigma_{k+1}$: a contradiction with Lemma~\ref{lemma:0:K:r}.
\\
{\bf ---~Case ``$\tau_{k}\preceq\sigma$'':}
Since $\sigma\preceq\sigma_{0}$, therefore $\tau_{k}\preceq\sigma_{0}$: a contradiction with Lemma~\ref{lemma:0:K:qr}.
\end{proof}
\section{Conclusion}
In modal logic, the problem of checking the unifiability of formulas has been introduced as a special case of the problem of checking the admissibility of inference rules~\cite{Rybakov:1997}.
Intuitively, for an axiomatically presented modal logic, the admissibility problem asks whether a given inference rule can be added to the axiomatization of the logic without changing the associated set of derivable formulas.
Its computability has been studied~---~for a limited number of normal modal logics like $\K{4}$, $\GL$ and $\S{4}$~---~by Je\u{r}\'abek~\cite{Jerabek:2007} and Rybakov~\cite{Rybakov:1984}.
Aside from these transitive normal modal logics and for the normal extensions of $\S{5}$, it is still unknown for numerous normal modal logics~---~for example $\K$, $\KD$ and $\KT$~---~whether the problem of checking the admissibility of inference rules is solvable.
The significance of the unification type in the research on the problem of checking the unifiability of formulas stems from the fact that if a normal modal logic is unitary, or finitary then the problem of checking the admissibility of inference rules can be reduced to the problem of checking the unifiability of formulas.
\\
\\
In this paper, we have adapted to $\KB$ the argument of Je\u{r}\'abek~\cite{Jerabek:2015} showing that $\K$ is nullary, though the nullariness character of $\KB$ have only been be obtained within the context of unification with parameters.
Seeing that the frames constructed in the proofs of Lemmas~\ref{easy:lemma:b} and~\ref{lemma:about:k:l:and:boxes} are reflexive and the symmetric unravellings of the models constructed in the proof of Lemma~\ref{lemma:6:K} are serial when the considered models are serial, or can be forced to be reflexive when the considered models are reflexive, therefore on checking the proofs of our results, the reader may easily verify that our adaptation also applies in the case of $\KDB$ and $\KTB$~---~one has only to replace ``$\KB$'' by ``$\KDB$'', or ``$\KTB$'', ``frame'' by ``serial frame'', or ``reflexive frame'', etc.
The nullariness character of $\KB$, $\KDB$ and $\KTB$ constitutes an answer to questions put forward by Dzik~\cite{Dzik:2007}.
Nevertheless, much remains to be done, seeing that, for instance, the types of simple Church-Rosser normal modal logics like $\KG$, $\KDG$ and $\KTG$ are unknown and for all $k\in\N$ such that $k\geq2$, the type of the least normal modal logic containing $\Box^{k}\bot$ is unknown.
\section*{Acknowledgements}
This paper has been written on the occasion of a $3$-months visit of \c{C}i\u{g}dem Gencer during the Fall $2018$ in Toulouse that was financially supported by {\it Universit\'e Paul Sabatier}\/ (``Professeurs invit\'es 2018'').
We make a point of thanking the colleagues of the {\em Institut de recherche en informatique de Toulouse}\/ who contributed to the development of the work we present today.
Special acknowledgement is also heartily granted to Maryam Rostamigiv (Toulouse University, France) and Tinko Tinchev (Sofia University St. Kliment Ohridski , Bulgaria) for their valuable remarks.
%We finally make a point of thanking the referees for their feedback: their helpful comments and their useful suggestions have been essential for improving the correctness and the readability of a preliminary version of this paper.
%
%
%
%

%
%
%
%

\begin{thebibliography}{10}
%
%
\bibitem{Baader:Ghilardi:2011}
{\sc Baader,} F., and S. {\sc Ghilardi,}
`Unification in modal and description logics',
\emph{Logic Journal of the IGPL} 19:705--730, 2011.
%
%
\bibitem{Baader:Morawska:2009}
{\sc Baader,} F., and B. {\sc Morawska,}
`Unification in the description logic $\mathcal{EL}$',
In: \emph{Rewriting Techniques and Applications,}
Springer 350--364, 2009.
%
%
\bibitem{Baader:Narendran:2001}
{\sc Baader,} F., and P. {\sc Narendran,}
`Unification of concept terms in description logics',
\emph{Journal of Symbolic Computation} 31:277--305, 2001.
%
%
\bibitem{Babenyschev:Rybakov:Schmidt:Tishkovsky:2010}
{\sc Babenyshev,} S., V. {\sc Rybakov,} R. {\sc Schmidt,} and D. {\sc Tishkovsky,}
`A tableau method for checking rule admissibility in $\S{4}$',
\emph{Electronic Notes in Theoretical Computer Science} 262:17--32, 2010.
%
%
\bibitem{Babenyshev:Rybakov:2011}
{\sc Babenyshev,} S., and V. {\sc Rybakov,}
`Unification in linear temporal logic $LTL$',
\emph{Annals of Pure and Applied Logic} 162:991--1000, 2011.
%
%
\bibitem{Balbiani:to:appear}
{\sc Balbiani,} P.,
`Remarks about the unification type of some non-symmetric non-transitive modal logics',
\emph{Logic Journal of the IGPL} (to appear).
%
%
\bibitem{Balbiani:Gencer:2017a}
{\sc Balbiani,} P., and \c{C}. {\sc Gencer,}
`$KD$ is nullary',
\emph{Journal of Applied Non-Classical Logics} 27:196--205, 2017.
%
%
\bibitem{Balbiani:Gencer:2017b}
{\sc Balbiani,} P., and \c{C}. {\sc Gencer,}
`Unification in epistemic logics',
\emph{Journal of Applied Non-Classical Logics} 27:91--105, 2017.
%
%
\bibitem{Balbiani:Tinchev:2016}
{\sc Balbiani,} P., and T. {\sc Tinchev,}
`Unification in modal logic $Alt_{1}$',
In: \emph{Advances in Modal Logic,}
College Publications 117--134, 2016.
%
%
\bibitem{Balbiani:Tinchev:2017}
{\sc Balbiani,} P., and T. {\sc Tinchev,}
`Elementary unification in modal logic $KD45$',
\emph{Journal of Applied Logic~---~IFCoLog Journal of Logics and their Applications} 5:301--317, 2018.
%
%
\bibitem{Blackburn:deRijke:Venema:2001}
{\sc Blackburn,} P., M. {\sc de Rijke,} and Y. {\sc Venema,}
\emph{Modal Logic,}
Cambridge University Press, 2001.
%
%
\bibitem{Chagrov:Zakharyaschev:1997}
{\sc Chagrov,} A., and M. {\sc Zakharyaschev,}
\emph{Modal Logic,}
Oxford University Press, 1997.
%
%
\bibitem{Chellas:1980}
{\sc Chellas,} B.,
\emph{Modal Logic. An Introduction,}
Cambridge University Press, 1980.
%
%
\bibitem{Cintula:Metcalfe:2010}
{\sc Cintula,} P., and G. {\sc Metcalfe,}
`Admissible rules in the implication-negation fragment of intuitionistic logic',
\emph{Annals of Pure and Applied Logic} 162:162--171, 2010
%
%
\bibitem{Dzik:2003}
{\sc Dzik,} W.,
`Unitary unification of $\S{5}$ modal logics and its extensions',
\emph{Bulletin of the Section of Logic} 32:19--26, 2003.
%
%
\bibitem{Dzik:2007}
{\sc Dzik,} W.,
\emph{Unification Types in Logic,}
Wydawnicto Uniwersytetu Slaskiego, 2007.
%
%
\bibitem{Dzik:2011}
{\sc Dzik,} W.,
`Remarks on projective unifiers',
\emph{Bulletin of the Section of Logic} 40:37--46, 2011.
%
%
\bibitem{Dzik:Wojtylak:2012}
{\sc Dzik,} W., and P. {\sc Wojtylak,}
`Projective unification in modal logic',
\emph{Logic Journal of the IGPL} 20:121--153, 2012.
%
%
\bibitem{Fernandez:Gil:2012}
{\sc Fern\'andez Gil,} O.,
`Hybrid Unification in the Description Logic $\mathcal{EL}$',
Master Thesis of Technische Universit\"at Dresden, 2012.
%
%
\bibitem{Gencer:deJongh:2009}
{\sc Gencer,} \c{C}., and D. {\sc de Jongh,}
`Unifiability in extensions of $K4$',
\emph{Logic Journal of the IGPL} 17:159--172, 2009.
%
%
\bibitem{Ghilardi:1999}
{\sc Ghilardi,} S.,
`Unification in intuitionistic logic',
\emph{Journal of Symbolic Logic} 64:859--880, 1999.
%
%
\bibitem{Ghilardi:2000}
{\sc Ghilardi,} S.,
`Best solving modal equations',
\emph{Annals of Pure and Applied Logic} 102:183--198, 2000.
%
%
\bibitem{Ghilardi:Sacchetti:2004}
{\sc Ghilardi,} S., and L. {\sc Sacchetti,}
`Filtering unification and most general unifiers in modal logic',
\emph{Journal of Symbolic Logic} 69:879--906, 2004.
%
%
\bibitem{Iemhoff:2001}
{\sc Iemhoff,} R.,
`On the admissible rules of intuitionistic propositional logic',
\emph{Journal of Symbolic Computation} 66:281--294, 2001.
%
%
%\bibitem{Jansana:1994}
%{\sc Jansana,} R.,
%`Some logics related to von Wright's logic of place',
%\emph{Notre Dame Journal of Formal Logic} 35:88--98, 1994.
%
%
\bibitem{Jerabek:2007}
{\sc Je\u{r}\'abek,} E.,
`Complexity of admissible rules',
\emph{Archive for Mathematical Logic} 46:73--92, 2007.
%
%
\bibitem{Jerabek:2015}
{\sc Je\u{r}\'abek,} E.,
`Blending margins: the modal logic {\it K} has nullary unification type',
\emph{Journal of Logic and Computation} 25:1231--1240, 2015.
%
%
\bibitem{Rybakov:1984}
{\sc Rybakov,} V.,
`A criterion for admissibility of rules in the model system $\S{4}$ and the intuitionistic logic.
\emph{Algebra and Logic} 23:369--384, 1984.
%
%
\bibitem{Rybakov:1985}
{\sc Rybakov,} V.,
`Bases of admissible rules of the logics $\S{4}$ and $\Int$',
\emph{Algebra and Logic} 24:55--68, 1985.
%
%
\bibitem{Rybakov:1997}
{\sc Rybakov,} V.,
\emph{Admissibility of Logical Inference Rules,}
Elsevier Science, 1997.
%
%
\bibitem{Rybakov:2001}
{\sc Rybakov,} V.,
`Construction of an explicit basis for rules admissible in modal system $\S{4}$',
\emph{Mathematical Logic Quarterly} 47:441--446, 2001.
%
%
%\bibitem{Rybakov:2002}
%{\sc Rybakov,} V.,
%`Unification in common knowledge logics',
%\emph{Bulletin of the Section of Logic} 31:207--215, 2002.
%
%
%\bibitem{Rybakov:2005}
%{\sc Rybakov,} V.,
%`Logical consecutions in discrete linear temporal logic',
%\emph{Journal of Symbolic Logic} 70:1137--1149, 2005.
%
%
%\bibitem{Rybakov:2008}
%{\sc Rybakov,} V.,
%`Multi-modal and temporal logics with universal formula~---~reduction of admissibility to validity and unification',
%\emph{Journal of Logic and Computation} 18:509--519, 2008.
%
%
%\bibitem{Rybakov:Terziler:Gencer:1999}
%{\sc Rybakov,} V., M. {\sc Terziler,} and \c{C}. {\sc Gencer,}
%`An essay on unification and inference rules for modal logics',
%\emph{Bulletin of the Section of Logic} 28:145--157, 1999.
%
%
%\bibitem{Segerberg:1986}
%{\sc Segerberg,} K.,
%`Modal logics with functional alternative relations',
%\emph{Notre Dame Journal of Formal Logic} 27:504--522, 1986.
%
%
\bibitem{Wolter:Zakharyaschev:2008}
{\sc Wolter,} F., and M. {\sc Zakharyaschev,}
`Undecidability of the unification and admissibility problems for modal and description logics',
\emph{ACM Transactions on Computational Logic} 9:25:1--25:20, 2008.
%
%
\end{thebibliography}
\end{document}